\renewcommand{\vec}[1]{{\mathbf #1}}
\newcommand{\perfmap}{M}
\newcommand{\reals}{\mathbf{R}}
\newcommand{\E}{\mathbf{E}}
\renewcommand{\Pr}{\mathbf{Pr}}
\newcommand{\OPT}{\mathrm{OPT}}
\newcommand{\clow}{p}
\newcommand{\chigh}{q}
\begin{document}


\RUNAUTHOR{Sekar, Vojnovic, and Yun}

\RUNTITLE{A Test Score Based Approach to Stochastic Submodular Optimization}

\TITLE{A Test Score Based Approach to Stochastic Submodular Optimization}


\ARTICLEAUTHORS{%
\AUTHOR{Shreyas Sekar}
\AFF{Harvard Business School, Boston, MA,
\EMAIL{ssekar@hbs.edu}} 
\AUTHOR{Milan Vojnovic}
\AFF{Department of Statistics, London School of Economics (LSE), London, UK,
\EMAIL{m.vojnovic@lse.ac.uk}}
\AUTHOR{Se-Young Yun}
\AFF{Department of Industrial and System Engineering, KAIST, South Korea,
\EMAIL{yunseyoung@kaist.ac.kr}}
} 

\ABSTRACT{We study the canonical problem of maximizing a stochastic submodular function subject to a cardinality constraint, where the goal is to select a subset from a ground set of items with uncertain individual performances to maximize their expected group value. Although near-optimal algorithms have been proposed for this problem, practical concerns regarding scalability, compatibility with distributed implementation, and expensive oracle queries persist in large-scale applications. Motivated by online platforms that rely on individual item scores for content recommendation and team selection, we propose a special class of algorithms that select items based solely on individual performance measures known as test scores. The central contribution of this work is a novel and systematic framework for designing test score based algorithms for a broad class of naturally occurring utility functions. We introduce a new scoring mechanism that we refer to as \emph{replication test scores} and prove that as long as the objective function satisfies a diminishing returns property, one can leverage these scores to compute solutions that are within a constant factor of the optimum.  We then extend our results to the more general stochastic submodular welfare maximization problem, where the goal is to select items and assign them to multiple groups to maximize the sum of the expected group values. For this more difficult problem, we show that replication test scores can be used to develop an algorithm that approximates the optimum solution up to a logarithmic factor. The techniques presented in this work bridge the gap between the rigorous theoretical work on submodular optimization and simple, scalable heuristics that are useful in certain domains. In particular, our results establish that in many applications involving the selection and assignment of items, one can design algorithms that are intuitive and practically relevant with only a small loss in performance compared to the state-of-the-art approaches.  }



\KEYWORDS{stochastic combinatorial optimization, submodular functions, welfare maximization, test scores} 

\maketitle

%


%
\section{Introduction}
\label{sec:intro}

A common framework for combinatorial optimization that captures problems arising in wide-ranging applications is that of \emph{selecting a finite set of items} from a larger candidate pool and \emph{assigning these items to one or more groups}. Such problems form the core basis for the online content recommendation systems encountered in platforms pertaining to knowledge-sharing (e.g., Stack Overflow, Reddit), e-commerce~\citep{Li11}, and digital advertising as well as \emph{team selection} problems arising in gaming~\citep{GMH07} and traditional hiring. A crucial feature of these environments is the intrinsic uncertainty associated with the underlying items and consequently, sets of items. Given this uncertainty, the decision maker's objective in these domains is to maximize the expected group-value associated with the set of items and their assignment. 

As a concrete application, consider an online gaming platform where the items correspond to players; the platform may seek to assign (a subset of) players to teams in order to ensure competitive matches or to maximize the winning probability for a specific team. Other scenarios relating to team selection---e.g., a company hiring a set of candidates or a school identifying top students for a tournament---can also be modeled in an analogous fashion. Alternatively, these optimization problems arise in online communities such as Stack Overflow or Reddit. Here, the items represent topics or questions and the platform wishes to present a collection of relevant topics to an incoming user with the goal of maximizing that user's engagement measured via clicks or answers. Finally, in digital advertising, items may refer to ads displayed to a user in a marketing campaign and the value results from conversion events such as a click or product purchase. Naturally, all of these constitute stochastic environments due to the underlying uncertainty, e.g., the performance of any individual player is not deterministic in the case of a gaming platform, and there is considerable uncertainty regarding a user's propensity to click or respond to a topic on knowledge platforms.

There are several fundamental challenges in the above applications that necessitate innovative algorithmic approaches. First, the value derived from a set of items may not be linear in that of the individual items and may in fact, model a more subtle relationship. For example, agents or topics may complement or supplement each other; the efficiency of a team may grow with team size but exhibit diminishing returns as more members are added due to coordination inefficiencies. Second, the intrinsic uncertainty regarding the value of individual items may affect the group value in surprising ways due to the non-linearity of the objective. As we depict later, there are situations where a set of `high-risk high-reward' items may outperform a collection of stable-value items even when the latter type provides higher value in expectation. Finally, we also face issues relating to computational complexity since the number of items and groups can be very large in online platform scenarios and the underlying combinatorial optimization problems are usually NP-Hard.



Despite the above challenges, a litany of sophisticated algorithmic solutions have been developed for the problems mentioned previously. Given to the intricacies of the setting, these algorithms tend to be somewhat complex and questions remain on whether these methods are suitable for the scenarios outlined earlier owing to issues regarding scalability, interpretability, and the difficulties of function evaluation. On the other hand, it is common practice in many domains to select or assign items by employing algorithms that base their decisions on \emph{individual item scores}---these represent unique statistics associated with each item that serve as a proxy for the item's quality or the relevance to the task at hand. At a high level, these algorithms only use the scores computed for individual items---each item's score is independent of other items---to select items and as such, avoid the practical issues that plague traditional algorithmic paradigms.  

To expand on this thesis, consider a dynamic online portal such as Stack Overflow that hosts over eighteen million questions and wishes to recommend the most relevant subset to each incoming user. The platform may find it impractical to recompute the optimal recommended set of questions every time a new batch of questions is posted and thus, many traditional optimization methods are not scalable. At the same time, content recommendation services typically maintain relevance scores for each question and user-type pair that do not vary as new questions are posted and are utilized in practice to generate recommendation sets. In a similar vein, online gaming platforms estimate skill ratings (scores) for individual players based only on their past performance, which are in turn used as inputs for matchmaking. When it comes to team formation, these score based approaches may be preferable to standard algorithms that require \emph{oracle access} to the performance of every possible team. Indeed, evaluating the expected value of every subset of players even before the teams are formed seems prohibitively expensive.


Clearly, algorithms for selecting or assigning items based solely on individual item scores are appealing in many domains because of their conceptual and computational simplicity. However, a natural concern is that restricting the algorithmic landscape to these simple score based approaches may result in suboptimal solutions because they may be unable to account for complicated dependencies between individual item performance and the group output. Motivated by this tension, we study the following fundamental question:
\vspace*{2mm}
\begin{quote}
\emph{Can algorithms that assign items to groups based on individual item scores achieve near-optimal group performance and if so, under what conditions?}

\end{quote}
\vspace*{2mm}
We briefly touch upon our framework for stochastic combinatorial optimization. Let $N = \{1,2,\ldots,n\}$ be a ground set of items and let $2^N$ denote all possible subsets of $N$. Given a feasible set $\mathcal{F}\subseteq 2^N$ of items, a value function $f:2^N\times \reals^{n}\rightarrow \reals_+$, and a distribution $P$ of a random $n$-dimensional vector $\vec{X}=(X_1, X_2, \ldots, X_n)$, our goal is to select a set $S^*\in \mathcal{F}$ that is a solution to
\begin{equation}
\max_{S\in \mathcal{F}} u(S):=\E_{\vec{X}\sim P}[f(S,\vec{X})].
\label{equ:sop}
\end{equation}

In later sections, we generalize this formulation to consider problems where the goal is to select multiple subsets of $N$ and assign them to separate groups. The optimization problem (\ref{equ:sop}) is further refined as follows (see Section~\ref{sec:problem} for formal definitions):
\begin{enumerate}[label=(\alph*)]
\item We focus primarily on the canonical problem of \emph{maximizing a stochastic monotone submodular function subject to a cardinality constraint}. This is a special case of the optimization problem in (\ref{equ:sop}) where $\mathcal{F}$ is defined by the cardinality constraint $|S|= k$ for a given parameter $k$, and value function $f$ is such that the set function $u:2^N \rightarrow \reals_+$ is submodular. 

\item We restrict our attention to value functions $f$ where the output of $f(S,\vec{X})$ depends only on the elements of $\vec{X}$ that correspond to $S$, i.e., $(X_i)_{i \in S}$. Further, $X_i$ denotes the random performance of item $i \in N$ and is distributed independently of all other $X_j$ for $j \neq i$. Therefore, $P = P_1 \times P_2 \times \ldots \times P_n$ so that $X_i \sim P_i$.

\end{enumerate}

The framework outlined above captures a broad class of optimization problems arising in diverse domains. For example, submodular functions have featured in a variety of applications such as facility location~\citep{Ahmed2011}, viral influence maximization, job scheduling~\citep{cohen2019overcommitment}, content recommendation and team formation. In particular, submodularity allows us to model positive synergies among items and capture the natural notion of \emph{diminishing returns to scale} that is prevalent in so many situations---i.e., the marginal value derived by adding an item to a set cannot be greater than that obtained by adding it to any of its subsets. Moreover, in content recommendation as well as team selection, it is natural to expect that the performance of a group of elements $S$ would simply be a function (albeit a non-linear one) of the individual performances of the members in $S$---$(X_i)_{i \in S}$. This is represented by our assumptions on the value function $f$. 

The problem of maximizing a submodular function subject to a cardinality constraint is known to be NP-Hard and consequently, there is a rich literature on approximation algorithms for both the deterministic~\citep{krausesurvey} and stochastic variants~\citep{asadpourN16}.  In a seminal paper, \cite{NWF78} established that a natural greedy algorithm (sequentially selecting items that yield largest marginal value) guarantees a $1-1/e$ approximation of the optimum value, which is tight~\citep{F98}. Despite the popularity of greedy and other approaches, it is worth noting for our purposes that almost all of the algorithms in this literature are \emph{not robust to changes in the input}. That is, as the ground set $N$ grows, it is necessary to re-run the entire greedy algorithm to generate an approximately optimal subset. Furthermore, as mentioned earlier, these methods extensively utilize \emph{value oracle queries}---access to the objective function is through a black-box returning $u(S)$ for any given set $S$. 

\subsubsection*{Test Score Algorithms} 

We now formalize the notion of individual item scores, which we refer to henceforth, as \emph{test scores}. Informally, a test score is an item-specific parameter that quantifies the suitability of the item for the desired objective (i.e., $f$). To ensure scalability, it is crucial that an item's score depends only on the marginal distribution the item's individual performance and the problem specification. Formally, the test score $a_i \in [0,\infty)$ of an item $i \in N$ is defined as: 
\begin{equation}
a_i = h(f, \mathcal{F}, P_i), 
\label{equ:ts}
\end{equation}
where $h$ is a mapping from the item's marginal distribution ($P_i$), the objective value function $f$ and constraint set $\mathcal{F}$ to a single number. Naturally, there are innumerable ways to devise a meaningful test score mapping $h$. Obvious examples include: $(a)$ \emph{mean test scores} where $a_i = \E[X_i]$, and $(b)$ \emph{quantile test scores}, where $a_i$ is the $\theta$-quantile of distribution $P_i$ for some suitable $\theta$. However, we prove later that algorithms that base their decisions on these natural candidates do not always yield near-optimal solutions. 

The design question studied in this paper is to identify a suitable test score mapping rule $h$ such that algorithms that leverage these scores can obtain near-optimal guarantees for the problem defined in~\eqref{equ:sop}. Formally, a test score algorithm is a procedure that computes the test scores for each item in $N$ according to some mapping $h$ and uses only these scores to determine a feasible solution $S$ for~\eqref{equ:sop}, e.g., by selecting the $k$ items with the highest scores. Test score algorithms were first introduced by \cite{kleinberg2015team}, who developed algorithms for a team formation problem for a single specific function $f$. In this work, we propose a novel test score mechanism and utilize it to retrieve improved guarantees for a large class of naturally occuring functions. 

Test score algorithms are particularly salient in large-scale applications when compared to a more traditional optimization method such as greedy. First, as the ground set $N$ changes (e.g., posts are added or deleted), this does not alter the scores of items still present in the ground set since an item's test score depends only on its own performance distribution. This allows us to eliminate significant computational overhead in dynamic environments such as online platforms. Second, test score computations are trivially parallelizable---implemented via distributed computation---since each item's test score can be computed on a separate machine. Designing algorithms that are amenable to distributed implementation~\citep{balkanskiBS18} is a major concern nowadays and it is worth noting that standard greedy or linear programming approaches do not fulfill this criterion. Finally, test score algorithms allow us to make fewer and simpler oracle calls (function evaluations) as we highlight later. We now present a stylized formulation of a stochastic submodular optimization problem in an actual application in order to better illustrate the role of test scores.

%
%
%
%

\begin{example}{\textbf{(Content Recommendation on Stack Overflow or Reddit)}}
	\label{example:contentreco}
	The ground set $N$ comprises of topics created by users on the website. The platform is interested in selecting a set of $k$ topics from the ground set and present them to an arriving user in order to maximize satisfaction or engagement. For simplicity, the topics can be broadly classified into two categories---set $A$ consisting of useful but not very exciting topics and set $B$ which encapsulates topics that are polarizing or exciting\footnote{For instance, Reddit identifies certain posts as controversial based on the ratio of upvotes and downvotes}. Mathematically, we can capture this selection problem using our framework by taking $X_i$ to denote the utility that a user derives from topic $i \in N$ (alternatively $X_i$ could denote the probability of clicking or responding to a topic). For example, $X_i = a$ with probability one for $i \in A$ as these topics are stable, whereas $X_i = b/p$ with probability $p$ for each risky topic $i \in B$. The selection problem becomes particularly interesting when $b < a < b/p$. Due to cognitive limitations, one can assume that a user engages with at most $r \leq k$ topics from the assortment. Therefore, the objective function is defined as follows: $f(S,\vec{X}) = \sum_{j=1}^r X_{(j)}(S)$, where $X_{(j)}(S)$ refers to the $j$-th largest variable $X_i$ for $i \in S$. In the extreme case, $r=1$ and each user clicks on at most one topic. We refer to these as the \emph{top-$r$} and \emph{best-shot} functions respectively in Section~\ref{sec:problem}. \hfill $\blacksquare$
\end{example}

The tradeoff between `high-risk-high-reward' items and more stable items arises in a large class of selection problems in the presence of uncertainty. For example, in online gaming as in other team selection scenarios, a natural contention occurs between high performing players who exhibit a large variance (set $B$) and more consistent players (set $A$). In applications involving team formation, it is natural to use the \emph{CES (Constant Elasticity of Substitution)} utility function  as the objective, i.e., $f(S,\vec{X}) = (\sum_{i \in S}X^r_i)^{1/r}$, where the value of $r$ indicates the degree of substitutability of the task performed by the players~\citep{FSV16}. In this work, we design a natural test score based algorithm that allows us to obtain constant factor approximations for stochastic submodular optimization for all of the above objectives functions.

\subsection{Main Contributions}

The primary conceptual contribution of this study is the introduction of a framework for analysis of test score based algorithms for stochastic combinatorial optimization problems involving selection and assignment. We believe that this paradigm helps bridge the gap between theory and practice, particularly in large-scale applications where quality or relevance scores are prominently used for optimization. For these cases, the mechanisms developed in this work provides a rigorous framework for computing and utilizing these scores. 

Our main technical contribution is the design of a test score mapping which gives us good approximation algorithms for two NP-Hard problems, namely: (a) maximizing a stochastic monotone submodular function subject to a cardinality constraint, and (b) maximizing a stochastic submodular welfare function, defined as a sum of stochastic monotone submodular functions subject to individual cardinality constraints. The welfare maximization problem is a strict generalization of the former and is of interest in online platforms, where items are commonly assigned to multiple groups, e.g., selection of multiple disjoint teams for an online gaming tournament. 

We now highlight our results for the first problem. We identify a special type of test scores that we refer to as \emph{replication test scores} and show that under a sufficient condition on the value function (extended diminishing returns), we achieve a \emph{constant factor approximation for the problem of maximizing a stochastic submodular function subject to a cardinality constraint}. At a high level, replication test scores can be interpreted as a quantity that measures both an item's individual performance as well its marginal contribution to larger team of equally skilled items---see Section~\ref{sec:SFM} for a formal treatment. Additionally, we also show the following:
\begin{itemize}
	\item We provide an intuitive interpretation of the extended diminishing returns property and prove that it is satisfied by a number of naturally occuring value functions including but not limited to the functions mentioned in our examples such as best-shot, top-$r$, and CES.
	
	\item  We show that replication scores enjoy a special role in the family of all feasible test scores: in particular, for any given value function, if there exist any test scores that guarantee a constant factor approximation for the submodular maximization problem, then it is possible to obtain a constant factor approximation using replication test scores. This has an important implication that in order to find good approximation factors, it suffices to consider replication test scores. 
	
	
	\item We highlight cases where natural test score measures such as mean and quantile test scores do not yield a constant factor approximation. We provide a tight characterization of their efficiency for the CES function---specifically, mean test scores provide only a $1/k^{1-1/r}$-approximation to the optimum and quantile scores do not guarantee a constant-factor approximation when $r < \Theta(\log(k))$. Recall that $r$ denotes the degree of substitutability among items.
	

\end{itemize}





Finally, for the more general problem of stochastic submodular welfare maximization subject to cardinality constraints, with the value functions satisfying the extended diminishing returns condition, we establish that replication test scores guarantee a $\Omega(\frac{1}{\log(k)})$-approximation to the optimum value, where $k$ is the maximum cardinality constraint. This approximation is achieved via a slightly more intricate algorithm that greedily assigns items to groups based on their replication test scores. 

Our results are established by a novel framework that can be seen as approximating (sketching) set functions using test scores. In general, a sketch of a set function is defined by two simpler functions that lower and upper bound the original set function within given approximation factors. In our context, we present a novel construction of a sketch that only relies on replication test scores to approximate a submodular function \emph{everywhere}. By leveraging this sketch, we show that selecting the $k$ items with the highest test scores is only a constant factor smaller than the optimal set. These results may be of independent interest.  

\subsection{Related Work} 
The problem of maximizing a stochastic submodular function subject to a cardinality constraint by using test scores was first posed by \cite{kleinberg2015team} who developed constant factor approximation algorithms but only for a specific value function, namely the top-$r$ function. They introduced the term `test scores' in the context of designing algorithms for team hiring to indicate that the relevant score for each candidate can often be measured by means of an actual test. Their work also provides some impossibility results, namely that test score algorithms cannot yield desirable guarantees for certain submodular functions. Our work differs in several respects. First, we show that test scores can guarantee a constant factor approximation for a broad class of stochastic monotone submodular functions, which includes different instances of value functions used in practice. Second, we extend this theoretical framework to the more general problem of stochastic submodular welfare maximization, and obtain novel approximation results by using test scores. Third, we develop a unifying and systematic framework based on approximating set functions by simpler test score based sketches. 

As we touched upon earlier, submodular functions are found in a plethora of settings and there is a rich literature on developing approximation algorithms for different variants of the cardinality-constrained and welfare maximization problems~\citep{lehmann2006combinatorial,vondrak08}. Commonly used algorithmic paradigms for these problems include greedy, local search, and linear programming (with rounding). Due to their reliance on these sophisticated techniques, most if not all of these algorithms are (a) not scalable in dynamic environments as the algorithm has to be fully re-executed every time the ground set changes, and (b) hard to implement in a parallel computing model. More importantly, these policies are inextricably tied to the value oracle model and hence, tend to query the oracle a large number of times; often these queries are aimed at evaluating the function value for arbitrary subsets of the ground set. As we illustrate in Section~\ref{sec:oracles}, oracle queries can be expensive in certain cases. On the other hand, the test score algorithm proposed in this work makes use of much fewer oracle queries. Within the realm of submodular maximization, there are three distinct strands of literature that seek to tackle each of the three issues mentioned above. 
\begin{itemize}
\item \emph{Dynamic Environments}: A growing body of work has sought to develop \emph{online algorithms} for submodular and welfare maximization problems in settings where the elements of the ground set arrive sequentially~\citep{feldmanZ18,korula2018online} In contrast to this work, the decisions made by online algorithms are irrevocable, where test score algorithms are only aimed at reducing the computational burden when the ground set changes. 

\item \emph{Distributed Implementation}: Following the rise of big data applications and map-reduce models, there has been a renewed focus on developing algorithms for submodular optimization that are suitable for parallel computing. The state-of-the-art (distributed) algorithms for submodular maximization are $O(\log(n))$-adaptive---they run for $O(\log(n))$ sequential rounds with parallel computations in each round~\citep{balkanskiBS18, fahrbachMZ19}. Since each test score can be computed independently, our results can be interpreted as identifying a well-motivated special class of submodular functions which admit $1$-adaptive algorithms. 

\item \emph{Oracle Queries}: The design of test score algorithms is well-aligned with the body of work on maximizing submodular set functions using a small number of value oracle queries~\citep{badanidiyuruMKK14,BS18,fahrbachMZ19}. In fact, our replication test score based algorithms only query the function value for subsets comprising of similar or identical items. 
\end{itemize}
Although there is a promising literature pertaining to each of these three challenges, our test score based techniques represent the first attempt at addressing all of them. While many of the above papers propose algorithms for deterministic environments, recently, there has been considerable focus on maximizing submodular functions in a stochastic setting~\cite[e.g.,][]{KS17,STK16,asadpourN16,gotovosHK15,kempeKT15,Asadpour2008}. However, the methods presented in these works do not address any of the concerns mentioned earlier and to a large extent,  focus explicitly on settings where it is feasible to \emph{adaptively probe} items of the ground set to uncover the realization of their random variable ($X_i$). 

More generally, a powerful paradigm for solving stochastic optimization problems as defined in~\eqref{equ:sop} is the technique of \emph{Sample Average Approximation} (SAA)~\citep{KSH02,SN05,swamy2012sampling}. These methods are typically employed when the following conditions are applicable, see e.g. \cite{KSH02}: (a) the function $u(S)$ cannot be written in a closed form, (b) the value of the function $f(S,\vec{x})$ can be evaluated for every given set $S$ and vector $\vec{x}$, and (c) the set $\mathcal{F}$ of feasible solutions is large. The fundamental principle underlying this technique is to generate samples $(\vec{x}^{(1)}, \ldots, \vec{x}^{(T)})$ independently from the distribution $P$ and use these to compute the set $S^*$ that is the optimal solution to $\arg\max_{S \in \mathcal{F}} \frac{1}{T}\sum_{i=1}^T f(S,\vec{x}^{(i)})$. 

In addition to the same drawbacks regarding scalability mentioned above, there are other situations where it may be advantageous to use a test score algorithm over SAA methods: (a) when the function $f$ is accessed via a value oracle, a large number of queries may be required to optimize the sample-approximate objective, and (b) even if oracle access is not a concern and the underlying function is rather simple (e.g., best-shot function from Example~\ref{example:contentreco}), computing the optimal set $S^*$ may be NP-Hard (see Appendix~\ref{app:saa}). 
Finally, by means of numerical simulations in Section~\ref{sec:disc1}, we highlight well-motivated scenarios where SAA methods may result in a higher error probability compared to test score algorithms under the same number of samples drawn.



The techniques in our work are inspired by the theory on set function sketching~\citep{goemansHIM09,balcanH11,CD17}, and their application to optimization problems~\citep{iyerB13}. While the $\Omega(1/\sqrt{n})$ sketch of~\cite{goemansHIM09} for general submodular functions does apply to our setting, we are able to provide tighter bounds (the logarithmic bound of Lemma~\ref{thm:SMBsketch}) for a special class of well-motivated submodular functions that cannot be captured by existing frameworks such as curvature~\citep{sviridenkoVW15}. Our approach is also similar in spirit to~\citet{iyerB13}, where upper and lower bounds in terms of so-called \textit{surrogate functions} were used for submodular optimization; the novelty in the present work stems from our usage of test scores for function approximation, which are conceptually similar to \emph{juntas}~\citep{feldmanV14}. We believe that the intuitive and natural interpretation of test score-based algorithms make them an appealing candidate for other problems as well. 

	
\subsection{Organization of the Paper} 

The paper is structured as follows. Section~\ref{sec:problem} provides a formal definition of optimization problems studied in this paper and introduces examples of value functions. Section~\ref{sec:SFM} contains our main result for the problem of maximizing a stochastic montotone submodular function subject to a cardinality constraint. Section~\ref{sec:subm-welf-maxim} contains our main result for the problem of maximizing a stochastic monotone submodular welfare function subject to cardinality constraints. Section~\ref{sec:disc} presents a numerical evaluation of a test score algorithm for a simple illustrative example, a tight characterization of approximation guarantees achieved by mean and quantile test scores for the CES value function, and some discussion points. Finally, we conclude in Section~\ref{sec:conc}. All the proofs of theorems and additional discussions are provided in Appendix.	
\section{Model and Problem Formulation} 
\label{sec:problem}

In this section, we introduce basic definitions of submodular functions, more formal definitions of the optimization problems that we study, and examples of various value functions. 

\subsection{Preliminaries: Submodular Functions}
\label{sec:prelim}
Given a ground set $N = \{1,2,\ldots, n\}$ of items or elements with $2^N$ being the set of all possible subsets of $N$, a set function $u:2^N \rightarrow \reals_+$ is \emph{submodular} if $u(S \cup T) + u(S \cap T) \leq u(S) + u(T)$, for all $S,T \in 2^N$. This condition is equivalent to saying that $u$ satisfies the intuitive \emph{diminishing returns property}: $u(T\cup \{i\}) - u(T) \leq u(S\cup \{i\}) - u(S)$ for all $i\in N$ and $S,T\in 2^N$ such that $S\subseteq T$.  Furthermore, we say that $u$ is \emph{monotone} if $u(S)\leq u(T)$ for all $S,T\in 2^N$ such that $S\subseteq T$.


Next, we adapt the definition of a stochastic submodular function, e.g. used in~\citep{asadpourN16}, as the expected value of a submodular value function. Let $g:\reals^n\rightarrow \reals_+$ be a \emph{value function} that maps $n$-dimensional vectors to non-negative reals---$g$ is said to be a \emph{submodular value function} if for any two vectors $\vec{x}, \vec{y}$ belonging to its domain:
\begin{equation}
g(\vec{x} \vee \vec{y}) + g(\vec{x} \wedge \vec{y}) \leq g(\vec{x}) + g(\vec{y}).
\label{eqn_subvalue}
\end{equation}

In the above definition, $\vec{x} \vee \vec{y}$ denotes the component-wise maximum and $\vec{x} \wedge \vec{y}$ the component-wise minimum. Note that when the domain of $g$ is the set of Boolean vectors (all elements taking either value $0$ or $1$), then~\eqref{eqn_subvalue} reduces to the definition of a submodular set function. Hence, submodular value functions are a strict generalization of submodular set functions. Finally, we say that the value function $g$ is monotone if for any two vectors $x$ and $y$ satisfying $\vec{y} \geq \vec{x}$ ($\vec{y}$ dominates $\vec{x}$ component-wise), we have $g(\vec{y}) \geq g(\vec{x})$. 

Consider the ground set $N$ and for every $S\in 2^N$, we define $\vec{x}\mapsto \perfmap_S(\vec{x})$ to be a mapping such that $M_S(\vec{x})_i = x_i$ if $i\in S$ and $M_S(\vec{x}) = \phi$, otherwise. Here, $\phi$ is a minimal element which does not change the function value by adding an item of individual value $\phi$. For example, for the mapping $g(\vec{x}) = \max\{x_1, x_2, \ldots, x_n\}$, we may define $\phi=0$. When it is clear from the context, we sometimes abuse notation by writing $g(\vec{x})$ for a vector of dimension $d < n$, instead of $g(\vec{x}, \vec{z})$ where $\vec{z}$ is a vector $\vec{x}$ of dimension $n-d$ that has all elements equal to $\phi$. Now, we are ready to define stochastic submodular functions. Suppose that each item $i \in N$ is associated with a non-negative random variable $X_i$ that is drawn independently from distribution $P_i$.  We assume that each $P_i(x)$ is a cumulative distribution function, i.e. $P_i(x)=\Pr[X_i \leq x]$. Given a monotone submodular value function $g$, a set function $u:2^N \rightarrow \reals_+$ is said to be a stochastic monotone submodular function if for all $S \in 2^N$:
\begin{equation}
u(S) = \E[g(\perfmap_S(X_1, X_2, \ldots,X_n))].
\label{eqn_stochasticsub}
\end{equation}

For example, if $g$ is the max or best-shot function, then $u(S) = \E[\max_{i \in S}\{X_i\}]$. The following result, which we borrow from Lemma~3 in~\cite{asadpourN16}, provides sufficient reasoning on why it is accurate to interpret $u$ to be submodular.




\begin{lemma} \label{lem_stochasticsubmodularity} Suppose that $g$ is a monotone submodular value function. Then, a set function $u$ that is defined as in~\eqref{eqn_stochasticsub} is a monotone submodular set function.
\end{lemma}

\subsection{Problem Definitions} 
\label{sec:sfm}
In this work, we study the design of test score algorithms for two combinatorial optimization problems, namely: (a) maximizing a stochastic monotone submodular function subject to a cardinality constraint, and (b) maximizing a stochastic monotone submodular welfare function defined as the sum of stochastic monotone submodular functions subject to cardinality constraints. We begin with the first problem. Recall the optimization problem presented in~\eqref{equ:sop}, and suppose that $\mathcal{F} = \{S \subseteq N \mid |S| = k\}$ for a given cardinality constraint $0 < k \leq n$ and let $\vec{X} = (X_1, \ldots, X_n)$ be a vector of random, independently and not necessarily identically distributed item performances such that for each $i \in N$, $X_i \sim P_i$. By recasting problem~\eqref{equ:sop} in terms of the notation developed in Section~\ref{sec:prelim}, we can now define the problem of maximizing a stochastic monotone submodular function subject to a cardinality constraint $k$ as follows\footnote{We used the formulation $u(S) = \E[f(S,\vec{X})]$ in the introduction to maintain consistency with the literature on stochastic optimization, e.g.,~\citep{KSH02}. For the rest of this paper, we will exclusively write $u(S) = \E[g(\perfmap_S(\vec{X}))]$ for convenience and to delineate the interplay between the set $S$ and the submodular value function $g$. }
\begin{equation}
\argmax_{S \in \mathcal{F}} u(S) := \E[f(S,\vec{X})] := \E[g(\perfmap_S(\vec{X}))], 
\label{equ:ug}
\end{equation}

where $g$ is a monotone submodular value function. Additionally, we assume the function $g$ to be \emph{symmetric}, meaning that its value is invariant to permutations of its input arguments, i.e. for every $\vec{x}\in \reals^n$, $g(\vec{x}) = g(\pi(\vec{x}))$ for any permutation $\pi(\vec{x})$ of the elements $\{x_1, x_2, \ldots, x_n\}$. This is naturally motivated by scenarios where the group value of a set of items depends on the individual performance values than the identity of the members who generate these values. For example, in the case of non-hierarchical team selection, it is reasonable to argue that two mutually exclusive teams $S, T$ whose members yield identical performances on a given day also end up providing the same group value. Similarity, in content recommendation, the probability that user clicks on at least one topic can be viewed as a function of the user's propensity to click on each individual topic. Finally, by seeking to optimize the expected function value in~\eqref{equ:ug}, we implicitly model a risk-neutral decision maker as is typically the case in online platforms. 
 
The stochastic submodular maximization problem specified in~\eqref{equ:ug} is NP-Hard even when the value function $g$ is symmetric (in fact,~\cite{goelGM06} show this is true for $g(\vec{x}) = \min\{x_1, \ldots, x_n\}$), and hence, we focus on finding approximation algorithms. Formally, given $\alpha \leq 1$, an algorithm is said to provide an $\alpha$-approximation guarantee for~\eqref{equ:ug} if for any given instance of the problem with optimum solution set $\mathrm{OPT}$, the solution $S$ returned by the algorithm satisfies $u(S) \geq \alpha u(\mathrm{OPT})$. Although a variety of approximation algorithms have been proposed for the submodular maximization problem, in this work, we focus on a special class of methods we refer to as \emph{test score algorithms}. Specifically, these are algorithms that take as input a vector of non-negative test scores $a_1, a_2, \ldots, a_n$, and use only these scores to determine a feasible solution $S$ for the problem~\eqref{equ:ug}. As defined in~\eqref{equ:ts}, the value of each test score $a_i$ can depend only on $g$, $k$, and $P_i$. Furthermore, we are particularly interested in proposing test score algorithms that simply select the $k$ items with the highest test scores in $(a_1, a_2, \ldots, a_n)$; such an approach is naturally appealing due to its intuitive interpretation. Clearly, the main challenge in this case is to design a suitable test score mapping rule that enables such a trivial algorithm to yield desirable guarantees.



\subsubsection*{Stochastic Submodular Welfare Maximization} Maximizing a stochastic submodular welfare function is a strict generalization of the problem of maximizing a stochastic monotone submodular function subject to a cardinality constraint as defined in~\eqref{equ:ug}. Here, we are given a ground set $N = \{1,\ldots, n\}$, and a collection of stochastic monotone submodular set functions $u_j : 2^N \rightarrow \reals_+$ with corresponding submodular value functions $g_j:\reals^n\rightarrow \reals_+$ for $j \in M := \{1,2,\ldots, m\}$. The goal is to find disjoint subsets $S_1, S_2, \ldots, S_m$ of the ground set of given cardinalities $|S_1| = k_1$, $|S_2| = k_2$, $\ldots$, $|S_m|=k_m$ that maximize the welfare function defined as
\begin{equation}
u(S_1, S_2, \ldots, S_m) = \sum_{j=1}^m u_j(S_j). 
\label{equ:smp}
\end{equation}
We refer to $M$ as the set of partitions. Similarly as for the previous problem, we consider symmetric, monotone, submodular value functions $g_j$ for each partition $j \in M$ so that the stochastic submodular set functions can be represented as follows: 
$$
u_j(S) = \E[g_j(M_S(X_{1,j}, X_{2,j},\ldots, X_{n,j}))] \quad \text{for all $j \in M$}.
$$
In the above expression, $X_{i,j}$ denotes the individual performance of item $i \in N$ with respect to partition $j \in M$. Each $X_{i,j}$ is drawn independently from a marginal distribution $P_{i,j}$ that is the cumulative distribution function $P_{i,j}(x) = \Pr[X_{i,j}\leq x]$. Our formulation allows for considerable heterogeneity as items can have different realizations of their individual performances for different partitions. Submodular welfare maximization problems arise naturally in domains such as team formation where decision makers are faced with the dual problem of selecting agents and assigning them to projects or teams. For example, this could model an online gaming platform seeking to choose a collection of teams to participate in a tournament or an organization partitioning its employees to focus on asymmetric tasks. In these situations, the objective function~\eqref{equ:smp} captures the aggregate value generated by all of the teams. 

Once again, we are interested in designing test score algorithms for stochastic submodular welfare maximization. Due to the generality of the problem, we define test score based approaches in a broad sense here and defer the specifics to Section~\ref{sec:subm-welf-maxim}. More formally, a test score algorithm for problem~\eqref{equ:smp} is a procedure whose input only comprises of vectors of test scores $(\vec{a}_{i,j})_{i \in N, j \in M}$, where the elements of each test score vector $\vec{a}_{i,j}$ are a function of $g_j, k_j$, and $P_{i,j}$. Note that in this general formulation, each item $i \in N$ and partition $j \in M$ is associated with multiple test scores $\vec{a}_{i,j}$.

\subsection{Examples of Value Functions}
\label{sec:val}

Many value functions used in literature to model production and other systems satisfy the conditions of being symmetric, monotone non-decreasing submodular value functions. In this section, we introduce and discuss several well known examples. 

A common value function is defined to be an increasing function of the sum of individual values: $g(\vec{x}) = \bar{g}\left(\sum_{i=1}^n x_i\right)$, where $\bar{g}$ is a non-negative increasing function. In particular, this value function allows to model production systems that exhibit a diminishing returns property when $\bar{g}$ is concave. This value function appears frequently in optimization problems when modeling risk aversion and decreasing marginal preferences,
for instance, in risk-averse capital budgeting under uncertainty, competitive facility
location, and combinatorial auctions \citep{Ahmed2011}. A popular example of such a function is the threshold or budget-additive function, i.e., $g(\vec{x}) = \min\{\sum_{i=1}^n x_i, B\}$ for some $B > 0$, which arises in a number of applications. 

Another example is the \emph{best-shot value function} defined as the maximum individual value $g(\vec{x}) = \max\{x_1,x_2,\ldots,x_n\}$. This value function allows to model scenarios when one only derives values from the best individual option. For example, this arises in online crowdsourcing systems in which solutions to a problem are solicited by an open call to the online community, several candidate solutions are received, but eventually only a best submitted solution is used. 

A natural generalization of the best-shot value function is a \emph{top}-$r$ \emph{value function} defined as the sum of $r$ highest individual values, for a given parameter $r\geq 1$, i.e. $g(\vec{x}) = x_{(1)} + x_{(2)} + \cdots + x_{(r)}$, where $x_{(i)}$ is the $i$-th largest element of input vector $\vec{x}$. This value function boils down to the best-shot value function for $r = 1$. This value function is of interest in a variety of applications such as information retrieval and recommender systems, where the goal is to identify a set of most relevant items. This value function was used in \cite{kleinberg2015team} to evaluate efficiency of test-score based algorithms for maximizing a stochastic monotone submodular function subject to a cardinality constraint. 

A well known value function is the \emph{constant elasticity of substitution (CES) value function}, which is defined by $g(\vec{x}) = \left( \sum_{i=1}^n x_i^r \right)^{1/r}$, for a positive value parameter $r$. This value function has been in common use to model production systems in economics and other areas~\citep{FSV16,DS77,A69,S56}. The family of CES value functions accommodates different types of production by suitable choice of parameter $r$, including the linear production for $r = 1$ and the best-shot production in the limit as the value of parameter $r$ goes to infinity. The CES value function is a submodular value function for values of parameter $r\geq 1$. For $r \neq 1$, the term $1/(1-r)$ is referred to as the elasticity of substitution---it is the elasticity of two input values to a production with respect to the ratio of their marginal products. 

Finally, we make note of the \emph{success probability value function}, defined by $g(\vec{x}) = 1-\prod_{i=1}^n(1-p(x_i))$, where $p:\reals \rightarrow [0,1]$ is an increasing function that satisfies $p(0) = 0$. This value function is often used as a model of tasks for which input solutions are independent and either good or bad (success or failure), and it suffices to have at least one good solution for the task to be successfully solved, e.g., see~\cite{kleinbergO11}. 

\subsection{Computation, Implementation, and the Role of Value Oracles}
\label{sec:oracles}


We conclude this section with a discussion of some practical issues surrounding test scores algorithms and function evaluation. Given that submodular set functions have representations that are exponential in size $(2^n)$, a typical modeling concession is to assume access to a value oracle for function evaluation. Informally, a value oracle is a black-box that when queried with any set $S \in 2^N$, returns the function value $u(S)$ in constant time. Although value oracles are a theoretically convenient abstraction, function evaluation can be rather expensive in applications pertaining to online platforms. This problem is further compounded in the case of stochastic submodular functions when the underlying item performance distributions $(P_1, P_2, \ldots, P_n)$ are unknown. Naturally, one would expect a non-zero query-cost to be associated with evaluating $g(\vec{x})$ even for a single realization $\vec{x}$ of the random vector $\vec{X}$. Under these circumstances, there is a critical need for algorithms that achieve desirable guarantees using significantly fewer queries and to eschew traditional approaches (e.g., greedy) that require polynomially many oracle calls. 

To illustrate these challenges, consider the content recommendation application from Example~\ref{example:contentreco} and suppose that both the distributions $(P_i)_{i \in N}$ and the value function $g$ are unknown. In order to (approximately) compute $u(S)$ for any $S \subseteq N$, it is necessarily to present the set $S$ of topics repeatedly to a large number of users and average their response (e.g., upvotes or click behavior). Clearly, a protracted experimentation phase brought about by too many oracle queries could lead to customer dissatisfaction or even a loss in revenue. Alternatively, in team hiring or online gaming, evaluating the function value for arbitrary subsets $S \subseteq N$ may be prohibitively expensive as it may not be possible to observe group performance before the team is even formed. The replication test score algorithm proposed in Section~\ref{sec:SFM} addresses these issues by not only making use of fewer oracle calls but also allowing for easier implementation since each evaluation of the function $g$ only requires samples from a single item's (or agent's) performance distribution $P_i$.

A secondary issue concerns the noise in the function evaluation or test score computation brought about by sampling the distributions $(P_i)_{i \in N}$. It may not be possible to precisely compute test scores $a_i$ that represent the expected value of some function under distribution $P_i$---e.g., mean test scores where $a_i = \E_{X_i\sim P_i}[X_i]$ or replication test scores in~\eqref{eqn:replscores}. In applications, test scores are defined as sample estimators with values determined by the observed data, i.e., utilize a sample mean instead of the population mean. In our analysis, we ignore the issue of estimation noise and assume oracle access that facilitates the precise computation of test scores that denote some expectation taken over $(P_i)_{i \in N}$. This assumption is justified provided that the estimators are unbiased and the test scores are estimated using a sufficient number of samples. We leave accounting for statistical estimation noise as an item for future research. 

Finally, it is worth highlighting that the benefits of test score algorithms do not come without a price. Using a test score based approach severely limits what an algorithm can do, which in turn may affect the achievable quality of approximation. For instance, the aforementioned greedy algorithm is able to leverage its unrestricted access to a value oracle and achieve a $1-1/e$-approximation for~\eqref{equ:ug} by carefully querying the function value for many different subsets $S \in 2^N$. Test score algorithms, however, do not have this luxury---instead, they rely indirectly on approximating answers to value oracle queries using only limited information, namely parameters associated with individual items $i \in N$ evaluated separately on the function $g$. 	
%

\section{Submodular Function Maximization}
\label{sec:SFM}

In this section we present our main result on the existence of test scores that guarantee a constant-factor approximation for maximizing a stochastic monotone submodular function subject to a cardinality constraint, for symmetric submodular value functions that satisfy an extended diminishing returns condition. We will show that this is achieved by special type of test scores.


We begin by introducing some basic terminology required for our sufficient condition. Given a value function $g: \reals_+^n \rightarrow \reals_+$ and $v \geq 0$, we say that $v$ has a non-empty \emph{preimage} under $g$ if there exists at least one $\vec{z} \in \reals_+^n$ such that $g(\vec{z}) = v$.

\begin{definition}[Extended Diminishing Returns] A symmetric submodular value function $g:\reals_+^n\rightarrow \reals_+$ is said to satisfy the \emph{extended diminishing returns} property if for every $v \geq 0$ that has a non-empty preimage under $g$, there exists $\vec{z} \in \reals_+^{n-1}$ such that:
		
\begin{itemize}
\item[(a)] $g(\vec{z}) = v$,
and
\item[(b)] for all $\vec{y}\in \reals^{n-1}_+$ such that $g(\vec{y})\leq v$, we have that $g(\vec{y},x) - g(\vec{y}) \geq g(\vec{z},x) - g(\vec{z}) \hbox{ for all } x\in \reals_+$.
\end{itemize}
\label{def:esa}
\end{definition}


Informally, the condition states that given that a value $v$ such that the function evaluates to this number at one or more points in its domain, then for at least one such point, say $\vec{z}$, the marginal benefit of adding an element of value $x$ to $\vec{z}$ cannot be larger than the marginal benefit of adding the same element to another vector $\vec{y}$ whose performance is smaller than $v$. The extended submodularity condition holds for a wide range of functions.  For example, the condition is satisfied by all value functions defined and discussed in Section~\ref{sec:val}, which is proved in Appendix~\ref{sec:check}.

We refer to this property as extended diminishing returns as it is consistent with the spirit of `decreasing marginal returns' as the function value grows. Indeed, as in the case of traditional submodular functions, adding a new element ($x$) provides greater marginal benefit to a vector yielding a smaller performance ($\vec{y}$) than to one providing a larger value $(\vec{z})$. In other words, we have diminishing marginal returns as the value provided by a vector $\vec{z}$ grows. Consider for example, a team application: a new member with some potential would be expected to make a less significant contribution to a high performing team than a low performing one. Similarly, in content recommendation, the added benefit provided by a new topic would be felt more strongly by a user who derives limited value from the original assortment than one who was highly satisfied to begin with. The underlying mechanism in both these examples is that a new member or topic would have a greater overlap in skills or content with a high performing group of items. 


A subtle point is worth mentioning here. For any given $v$, if there exist multiple points in the domain at which the function $g$ evaluates to $v$, then the extended diminishing returns property only guarantees the existence of a single vector $\vec{z}$ for which $g(\vec{z},x) - g(\vec{z}) \geq g(\vec{y},x) - g(\vec{y})$ holds for all $\vec{y}, x$ such that $g(\vec{y}) \leq v$. Simply put, there may be other vectors which also evaluate to $v$ which do not satisfy the above inequality.\footnote{Suppose that $g$ is the top$-r$ function defined in Section~\ref{sec:val} for $r=2$, $\vec{x} = (1,1)$, and $v = 4$. Consider vectors $\vec{y}_1 = (2,2)$ and $\vec{y}_2 = (4)$ such that $g(\vec{y}_1) = g(\vec{y}_2) = v = 4$. It is not hard to deduce that for any  $0 < z \leq 1$, $g(\vec{x},z) - g(\vec{x}) = 0 = g(\vec{y}_1,z) - g(\vec{y}_1) < g(\vec{y}_2,z) - g(\vec{y}_2) = z$. That is $\vec{y}_1$ satisfies the conditions in Definition~\ref{def:esa} but $\vec{y}_2$ does not.}
 We remark that this is actually a weaker requirement than imposing that all such vectors satisfy condition (b) in Definition~\ref{def:esa}---this allows our results to be applicable for a broader set of functions. That being said, most of the value functions that we specify in Section~\ref{sec:val} except for top-$r$ ($r > 1$) satisfy a stronger version of extended diminishing returns where the condition $g(\vec{z}, x) - g(\vec{z}) \geq g(\vec{y},x) - g(\vec{y})$ holds for every two points $\vec{z},\vec{y}\in \reals_+^{n-1}$ such that $g(\vec{y})\leq g(\vec{z})$.


We next introduce the special type of test scores, we refer to as replication test scores. 

\begin{definition}[replication test scores] Given a symmetric submodular value function $g$ and cardinality parameter $k$, for every item $i\in N$, the replication test score $a_i$ is defined by
\begin{equation}
\label{eqn:replscores}
a_i = \E[g(X_i^{(1)}, X_i^{(2)},\ldots, X_i^{(k)},\phi,\ldots,\phi)]
\end{equation}
where $X_i^{(1)}, X_i^{(2)},\ldots, X_{i}^{(k)}$ are independent and identically distributed random variables with distribution $F_i$. 
\label{def:rep}
\end{definition}

The replication test score of an item can be interpreted as the expected performance of a \emph{virtual} group of items that consists of $k$ independent replicas of this item, hence the name replication scores. Note that a replication test score is defined for a given function $g$ and cardinality parameter $k$; we omit to indicate this in the notation $a_i$ for simplicity. 

In contrast to mean or quantile test scores that simply provide some measure of an item's performance, replication test scores capture both the item's individual merit as well as its contribution to a larger group. To understand this distinction, consider Example~\ref{example:contentreco} where $g(\vec{x}) = \max\{x_1, x_2, \ldots, x_n\}$ and $p=1/k$. Clearly, the mean performance of stable type $A$ items ($a$) is larger than the mean performance of polarizing topics of type $B$ ($b$). However, the replication score of a type $B$ item is $(1 - (1-p)^k)\frac{b}{p} \geq (1-\frac{1}{e}) b k$ which for large enough $k$ can be  larger than the replication score of a type $A$ item which still remains $a$. The larger replication score of type $B$ topics captures the intuition that risky topics can often provide significant marginal benefits to an existing assortment. Finally, in the case of content recommendation, one can employ a natural mechanism to estimate the replication scores even when the objective function $g$ and distributions $(P_i)_{i \in N}$ are unknown. Namely, in order to compute the replication score for a topic of type $A$ (or $B$), it suffices to present $k$ items of this type to a large number of incoming users and compute the average response. 

We now present the main result of this section. 

\begin{theorem} Suppose that the utility set function is the expected value of a symmetric, monotone submodular value function that satisfies the extended diminishing returns property. Then, the greedy selection of items in decreasing order of replication test scores yields the utility value that is at least $(1-1/e)/(5-1/e)$ times the optimal value.   
\label{thm:subtest-sum}
\end{theorem}

In the remainder of this section, we prove Theorem~\ref{thm:subtest-sum}. Along the way, we derive several results that connect the underlying discrete optimization problem with approximating set functions, which may be of independent interest.

\label{sec:sbfp}

The key mathematical concept that we use is a \emph{sketch} of a set function, which is an approximation of a potentially complicated set function using simple polynomial-time computable lower and upper bound set functions, we refer to as a \emph{minorant} and a \emph{majorant} sketch function, respectively.  
	
\begin{definition}[Sketch] A pair of set functions $(\underline{v},\bar{v})$ is said to be a $(\clow,\chigh)$-sketch of a set function $u:2^N\rightarrow \reals_+$, if the following condition holds:
\begin{equation}
\clow \underline{v}(S) \leq u(S) \leq \chigh \bar{v}(S), \hbox{ for all } S\subseteq N.
\end{equation}
In particular, if $(v,v)$ is a $(\clow,\chigh)$-sketch, we refer to $v$ as a \emph{strong sketch} function.\footnote{Our definition of a strong sketch is closely related to the following definition of a sketch used in literature (e.g., see \cite{CD17}): a set function $\tilde{v}$ is said to be a $\alpha$-sketch of $u$ if $\tilde{v}(S)\leq u(S)\leq \alpha \tilde{v}(S)$ for all $S\subseteq N$. Indeed, if $v$ is a $(p,q)$-strong sketch of $u$, then $\tilde{v}(S):=pv(S)$ is a $q/p$-sketch of $u$.} 
\label{def:weaksketch}
\end{definition}
	
Although the above definition is quite general, and subsumes many trivial sketches (for e.g, $\underline{v} = 0, \bar{v} = \infty$), practically useful sketches would satisfy a few fundamental properties such as (a) when given a set function whose description may be exponential in $n$, $\underline{v}$ and $\bar{v}$ must be polynomially expressible, and (b) $\underline{v}$ and $\bar{v}$ must be sufficiently close to each other at points of interest for the sketch to be meaningful. Our first result provides sufficient conditions on the sketch functions to obtain an approximation algorithm for maximizing a monotone submodular set function subject to a cardinality constraint.
	
\begin{lemma} Suppose that (a) $\underline{v}$ and $\bar{v}$ are minorant and majorant set functions that are a $(\clow,\chigh)$-sketch of a submodular set function $u:2^N\rightarrow \reals_+$ and (b) there exists $S^* \subseteq \argmax_{S:|S|=k}\underline{v}(S)$ that satisfies $\bar{v}(S) \leq \underline{v}(S^*)$ for every $S\subseteq N$ that has cardinality $k$ and is completely disjoint from $S^*$, i.e. $S\cap S^* = \emptyset$. Then, the following relation holds: 
$$
u(S^*) \geq \frac{\clow}{\chigh+\clow}u(\OPT), 
$$
where $\OPT$ denotes an optimum set of cardinality $k$.
\label{lem:weaksketchguar}	
\end{lemma}

The proof of Lemma~\ref{lem:weaksketchguar} is provided in Appendix~\ref{app:weaksketchguar}. The proofs follows by basic properties of submodular set functions and conditions of the lemma. 

The result in Lemma~\ref{lem:weaksketchguar} tells us that if we can find a minorant set function $\underline{v}$ and a majorant set function $\bar{v}$ that are a $(p,q)$-sketch for a submodular set function $u$ and that satisfy the conditions of the lemma, then any solution of the problem of maximizing the submodular set function $\underline{v}$ subject to a cardinality constraint is a $p/(p+q)$-approximation for the problem of maximizing the submodular set function $u$ subject to the same cardinality constraint. What remains to be done is to find such minorant and majorant set functions, and moreover, show that for every $S$, the value of these functions can be computed in polynomial-time by using only test scores of items in $S$.

We define a minorant set function $\underline{v}$ and a majorant set function $\bar{v}$ which for any given test scores $a_1, a_2, \ldots, a_n$ are defined as, for every $S\subseteq N$,
\begin{equation}
\underline{v}(S) = \min\{a_{i}\mid i\in S\} \hbox{ and } \bar{v}(S) = \max\{a_{i}\mid i\in S\}.
\label{equ:minmax}
\end{equation}

For the minorant set function $\underline{v}$ defined in (\ref{equ:minmax}), the problem of maximizing $\underline{v}(S)$ over $S\subseteq N$ subject to cardinality constraint $|S| = k$ boils down to selecting a set of $k$ items with largest test scores. Obviously, the set functions $\underline{v}$ and $\bar{v}$ defined in (\ref{equ:minmax}) satisfy condition (b) in Lemma~\ref{lem:weaksketchguar}.

We only need to show that there exist test scores $a_1, a_2, \ldots, a_n$ such that $(\underline{v},\bar{v})$ is a $(p,q)$-sketch of the set function $u$. We say that $a_1, a_2, \ldots, a_n$ are $(\clow,\chigh)$-good test scores if $(\underline{v},\bar{v})$ is a $(\clow,\chigh)$-sketch of the set function $u$. If $p/q$ is a constant, we refer to $a_1, a_2, \ldots,a_n$ as \emph{good test scores}. In this case, by Lemma~\ref{lem:weaksketchguar}, selecting a set of $k$ items with largest test scores guarantees a constant-factor approximation for the problem of maximizing the set function $u(S)$ subject to the cardinality constraint $|S|=k$. More generally, we have the following corollary.

\begin{corollary} Suppose that test scores $a_1, a_2, \ldots, a_n$ are $(\clow,\chigh)$-good. Then, greedy selection of items in decreasing order of these test scores yields a utility of value that is at least $p/(p+q)$ times the optimum value. In particular, if $p/q$ is a constant, than the greedy selection guarantees a constant-factor approximation for maximizing the submodular set function $u(S)$ subject to the cardinality constraint $|S|=k$. 
\label{cor:approxtest}
\end{corollary}

We next need to address the question whether for a given stochastic monotone submodular function, there exists good test scores. If good test scores exist, it is possible that there are different definitions of test scores that are good test scores. The lemma shows that replication test scores, defined in Definition~\ref{def:rep}, are good test scores, whenever good test scores exist. 
		
\begin{lemma} Suppose that a utility function has $(\clow,\chigh)$-good test scores. Then, replication scores are $(\clow/\chigh,\chigh/\clow)$-good test scores.
\label{thm:good-test}
\end{lemma}
	
The proof of Lemma~\ref{thm:good-test} is provided in Appendix~\ref{thm:good-test-app}. The lemma tells us to check whether a utility function has good test scores, it suffices to check whether for this utility function, replication test scores are good test scores. If replication test scores are not good test scores for a given utility function, then there exist no good test scores for this utility function. 

In the next lemma, we show that extended diminishing returns, which we introduced in Definition~\ref{def:esa}, is a sufficient condition for replication test scores to be good test scores. 
		
\begin{lemma} Suppose that $g:\reals_+^n \rightarrow \reals_+$ is a symmetric, monotone submodular value function that satisfies the extended diminishing returns property. Then, replication test scores are $(1-1/e,4)$-good test scores, and consequently are good test scores. 
\label{thm:subtest}
\end{lemma}

The proof of Lemma~\ref{thm:subtest} is provided in Appendix~\ref{app:repisgood}. Here we briefly discuss some of the key steps of the proof. First, for the lower bound, we need to show that for every $S \subseteq N$: $u(S) \geq (1-1/e) \underline{v}(S) = (1-1/e)\min\{a_i\mid i\in S\}$, where $a_i$ is the replication test score of item $i$. Suppose that $S = \{1,2,\ldots, k\}$ and without loss of generality, $a_1 = \min\{a_i\mid i\in S\}$. Then, we show by induction that for every $j\in \{1, \ldots, k\}$,
\begin{equation}
\label{eqn_inductionlowerbound}
u(\{1,2,\ldots,j\}) \geq \left(1-\frac{1}{k}\right)u(\{1,2,\ldots,j-1\}) + \frac{1}{k}a_1.
\end{equation}
The proof involves showing that the marginal contribution of adding item $j$ to the set $\{1,2,\ldots,j-1\}$ is closely tied to the marginal contribution of adding item $j$ to a set comprising of $k-1$ other (independently drawn) copies of item $j$. The latter quantity is at most $a_j/k$, which by definition is greater than or equal to $a_1/k$. The exact factor of $1-1/e$ comes from applying the above inequality in a cascading fashion from $u(\{1,2,\ldots,k\})$ to $u(\{1\})$.

The proof of the upper bound is somewhat more intricate. The first step involves carefully constructing a vector $\vec{z} \in \reals^{n-1}_+$ such that $g(\vec{z})$ is larger than $u(S)$ by an appropriate constant factor (say $c$). Imagine that $S^*$ represents some set of $\n-1$ items such that $u(S^*) = g(\vec{z})$. By leveraging monotonicity and submodularity, we have that $u(S) \leq u(S^*) + \sum_{i \in S}(u(S^* \cup \{i\}) - u(S^*))$. Let $\vec{x}$ represent a vector comprising of $k-1$ independent copies of random variables drawn from distribution $F_i$. Now, as per the extended diminishing returns condition, for any realization of $\vec{x}$ such that $g(\vec{x}) \leq g(\vec{z})$, it must be true (assuming that the careful construction $\vec{z}$ leverages Definition~\ref{def:esa}) that:
$$u(S^* \cup \{i\}) - u(S^*)) = g(\vec{z},x_i) - g(\vec{z}) \leq g(\vec{x},x_i) - g(\vec{x}) \hbox{ given that } g(\vec{x}) \leq g(\vec{z}).$$

Moreover, one can apply Markov's inequality to show that $g(\vec{z}) \geq g(\vec{x})$ is true with probability at least $1/c$. Taking the expectation of $\vec{x}$ conditional upon $g(\vec{z}) \geq g(\vec{x})$ gives us the desired upper bound.

The statement of Theorem~\ref{thm:subtest-sum} follows from Corollary~\ref{cor:approxtest} and Lemma~\ref{thm:subtest}.

\section{Submodular Welfare Maximization}
\label{sec:subm-welf-maxim}
	
In this section we present our main result for the stochastic monotone submodular welfare maximization problem. Here, the goal is to find disjoint $S_1, S_2, \ldots, S_m \subseteq N$ satisfying cardinality constraints $|S_j| = k_j$ for all $j\in \{1,2,\ldots,m\}$ that maximize the welfare function $u(S_1, S_2, \ldots, S_m)=\sum_{j=1}^m u_j(S_j)$.

\begin{theorem} \label{thm:welf-maxim} Given an instance of the submodular welfare maximization problem such that the utility functions satisfy the extended diminishing returns property, and the maximum cardinality constraint (i.e., $\max\{k_1,k_2,\ldots,k_m\}$) is $k$, there exists a test score-based algorithm (Algorithm~\ref{alg_greedyassignment}) that achieves a welfare value of at least $1/(24(\log(k)+1))$ times the optimum value. 
\end{theorem}
We briefly comment on the efficacy of test score algorithms for the submodular welfare maximization problem. Unlike the constant factor approximation guarantee obtained in Theorem~\ref{thm:subtest-sum}, test score algorithms only yield a logarithmic-approximation to the optimum for this more general problem. Although constant factor approximation algorithms are known for the submodular welfare maximization problem~\citep{Calinescu2011}, these approaches rely on linear programming and other complex techniques and hence, may not be scalable or amenable to distributed implementation. On the other hand, we focus on an algorithm that is easy to implement in practice but relies on a more restrictive computational model, leading to a worse approximation. Finally, it is worth noting in many actual settings, the value of the cardinality constraint $k$ tends to be rather small in comparison to $n$; e.g., in content recommendation, it is typical to display $25$-$50$ topics per page. In such cases, the loss in approximation due to the logarithmic factor would not be significant.

In the remainder of this section, we provide a proof of Theorem~\ref{thm:welf-maxim}. We will present an algorithm that uses replication test scores, in order to achieve the logarithmic guarantee. The proof is based on using strong sketches of set functions. 


We follow the same general framework as for the submodular function maximization problem, presented in Section~\ref{sec:sbfp}, which in this case amounts to identifying a strong sketch function for each utility set function, defined by using replication test scores, and then using a greedy algorithm for welfare maximization that carefully leverages these replication test scores to achieve the desired approximation guarantee. The following lemma establishes a connection between the submodular welfare maximization problem and strong sketches. 
	 	
\begin{lemma}
Consider an instance of the submodular welfare maximization with utility functions $u_1, u_2, \ldots, u_m$ and parameters of the cardinality constraints $k_1,k_2, \ldots, k_m$. Let $\vec{\OPT} = (\OPT_1, \OPT_2, \ldots, \OPT_m)$ denote an optimum partition of items. Suppose that for each $j\in M$, $(v_j,v_j)$ is a $(\clow,\chigh)$-sketch of $u_j$, and that $S_1, S_2,\ldots, S_m$ is an $\alpha$-approximation to the welfare maximization problem with utility functions $v_1, v_2,\ldots, v_m$ and the same cardinality constraints. Then,
$$
\sum_{j=1}^m u_j(S_j) \geq \alpha\frac{\clow}{\chigh}u(\vec{\OPT}) =\alpha\frac{\clow}{\chigh}\sum_{j=1}^m u_j(\OPT_j) .
$$
\label{clm_strskeproject}
\end{lemma}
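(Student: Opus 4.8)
The plan is to establish the claimed bound by chaining together the two defining inequalities of the strong sketches with the $\alpha$-approximation guarantee, passing through the common feasible region shared by the two welfare problems. Write $\OPT = (O_1, O_2, \ldots, O_m)$ for the optimal disjoint sets of the true $u$-welfare problem, so that $u(\OPT) = \sum_{j=1}^m u_j(O_j)$, and let $(T_1, T_2, \ldots, T_m)$ denote an optimal solution of the welfare problem with the surrogate objectives $(v_1, \ldots, v_m)$ under the identical cardinality and disjointness constraints.

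First I would apply the lower inequality $\clow v_j(S) \le u_j(S)$ of each strong sketch at $S = S_j$ and sum over $j$, obtaining $\sum_{j=1}^m u_j(S_j) \ge \clow \sum_{j=1}^m v_j(S_j)$. Next, since $(S_1, \ldots, S_m)$ is by hypothesis an $\alpha$-approximation for the surrogate welfare problem, we have $\sum_{j=1}^m v_j(S_j) \ge \alpha \sum_{j=1}^m v_j(T_j)$.

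The conceptual crux, and really the only step requiring a moment's thought, is that the feasible regions of the two welfare problems coincide exactly: both ask for disjoint sets $(R_1, \ldots, R_m)$ with $|R_j| = k_j$. Hence $\OPT$ is itself a feasible point of the surrogate problem, and optimality of $(T_1, \ldots, T_m)$ yields $\sum_{j=1}^m v_j(T_j) \ge \sum_{j=1}^m v_j(O_j)$. Applying the upper inequality $u_j(O_j) \le \chigh v_j(O_j)$, rearranged as $v_j(O_j) \ge u_j(O_j)/\chigh$, and summing gives $\sum_{j=1}^m v_j(O_j) \ge u(\OPT)/\chigh$. Concatenating the four inequalities produces $\sum_{j=1}^m u_j(S_j) \ge \clow \cdot \alpha \cdot \chigh^{-1}\, u(\OPT) = \alpha(\clow/\chigh)\,u(\OPT)$, as desired.

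I do not anticipate any genuine obstacle here: the argument is essentially bookkeeping, the care being only to keep the sketch factors on the correct side of each inequality and to note that the surrogate and true welfare problems share a common constraint set, so no separate matroid or feasibility argument is needed. The one structural point worth flagging is that \emph{strongness} of the sketch (the same $v_j$ serving as both upper and lower bound) is what makes this clean: a single surrogate objective can be maximized and then compared against $u$ from both directions. With a merely weak sketch $(\underline{v}_j,\bar{v}_j)$ one would have to decide which surrogate to optimize, and the lower and upper bounds would no longer be tied through one function, which is precisely why the single-project analogue, Lemma~\ref{lem:weaksketchguar}, required its additional disjointness hypothesis and incurred the weaker factor $\clow/(\chigh+\clow)$.
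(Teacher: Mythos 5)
Your proof is correct and follows essentially the same argument as the paper's: both chain the lower sketch bound, the $\alpha$-approximation guarantee, the optimality of the surrogate solution over the common feasible set, and the upper sketch bound, merely written in the opposite order. No gaps.
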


The proof of Lemma~\ref{clm_strskeproject} is provided in Appendix~\ref{app:strsketch} 

We next define set functions that we will show to be strong sketch for utility functions of the welfare maximization problem that satisfy the extended diminishing returns property. Fix an arbitrary set $S\subseteq N$ such that $|S| = k$ and $j\in M$. Let $a_{i,j}^r$ be the replication score of item $i$ for value function $g_j$ and cardinality parameter $r$, i.e., 
$$
a^r_{i,j} = \E[g_j(X_i^{(1)}, X_i^{(2)},\ldots, X_i^{(r )},\phi,\ldots,\phi)].
$$
Let $\pi(S,j) = (\pi_1(S,j), \dots, \pi_k(S,j))$ be a permutation of items in $S$ defined as follows: 
\begin{equation}
\begin{array}{rl}
\pi_1(S,j) &= \argmax_{i \in S}a_{i,j}^1\\
\pi_2(S,j) &= \argmax_{i \in S\setminus\{\pi_1(S,j)\}}a_{i,j}^2\\
&\vdots  \\
\pi_k(S,j) &= \argmax_{i\in S\setminus \{\pi_1(S,j),\ldots,\pi_{k-1}(S,j)\}} a_{i,j}^{k}.
\end{array}
\label{equ:pi}
\end{equation}
We define a set function $v_j:2^N\rightarrow \reals_+^n$ for every set $S\subseteq N$ of cardinality $k$ as follows:
\begin{equation}
v_j(S) = a_{\pi_1(S,j),j}^1 + \frac{1}{2}a_{\pi_2(S,j),j}^2 + \cdots + \frac{1}{k}a_{\pi_k(S,j),j}^k.
\label{vs}
\end{equation}

The definition of set function $v_j$ in (\ref{vs}) can be interpreted as defining the value $v_j(S)$ for every given set $S$ to be additive with coefficients associated with items corresponding to their virtual marginal values in a greedy ordering of items with respect to these virtual marginal values. 


Given a partition of items in disjoint sets $S_1, S_2, \ldots, S_m$, we define a welfare function $v(S_1, S_2,\ldots,S_m) = \sum_{j=1}^m v_j(S_j)$. We next show that that set functions defined in (\ref{vs}) are strong sketch functions. 

\begin{lemma} Suppose that a set function $u_j$ is defined as the expected value of a symmetric, monotone submodular value function that satisfies the extended diminishing returns condition. Then, the set function $v_j$ given by (\ref{vs}) is a $(1/(2(\log(k)+1)),6)$ strong sketch of $u_j$.  
\label{thm:SMBsketch}
\end{lemma}

The proof of Lemma~\ref{thm:SMBsketch} is provided in Appendix~\ref{app_proof_lemmasmbsketch}. 

By Lemma \ref{clm_strskeproject} and Lemma \ref{thm:SMBsketch}, for any stochastic monotone submodular welfare maximization problem with utility functions satisfying the extended diminishing returns condition, any $\alpha$-approximate solution to the submodular welfare maximization problem, we refer to as a surrogate welfare maximization problem with the welfare function $v(S_1, S_2, \ldots, S_m)$ subject to the same cardinality constraints as in the original welfare maximization problem, is a $c\alpha/(\log(k)+1)$-approximate solution to the original welfare maximization problem, where $c$ is a positive constant. It remains to now to show that the surrogate welfare maximization problem admits an $\alpha$-approximate solution. We next show that a naural greedy algorithms applied to the surrogate welfare maximization problem guarantees a $1/2$-approximation for this problem. 

Consider a natural greedy algorithm for the surrogate welfare maximization problem that works for the case of one or more partitions. Given the replication test scores for all items and all partitions, in each step $r$, the algorithm adds an unassigned item $i$ and partition $j$ that maximizes $a^{r_j}_{i,j}$ where $r_j$ is the number of elements assigned to partition $j$ in previous steps. That is, in each iteration, an assignment of an item to a partition is made that yields the largest marginal increment of the surrogate welfare function. The algorithm is more precisely defined in Algorithm~\ref{alg_greedyassignment}.

\begin{algorithm}[t]
	Initialize assignment $S_1 = S_2 = \ldots = S_m = \emptyset$\, $A=\{1,2,\dots,n\}$, $P=\{1,2,\dots,m \}$\\
	/* $S_j$ and $A$ denote the set of assigned items to partition $j$ and the set of unassigned items */ \\
	\While {$|A| >0$ and $|P|>0$}
	{$(i^*, j^*) = \argmax_{(i,j)\in A\times P } a^{|S_j|+1}_{i,j}/(|S_j|+1)$ ~~ /* with random tie break */\\ 
		$S_{j^*} \leftarrow S_{j^*} \cup \{i^* \}$ and $A \leftarrow A\setminus \{i^*\}$ ~~ /* assign item $i^*$ to partition $j^*$ */\\
		\If{$|S_{j^*}| \ge k_j$}
		{$P \leftarrow P \setminus \{j^* \}$ ~~ /* remove partition $j^*$ from the list */}
		
	}
	\caption{Greedy Algorithm for Submodular Welfare Maximization Problem}
	\label{alg_greedyassignment}
\end{algorithm}

In the following lemma, we show that the greedy algorithm guarantees a $1/2$-approximation for the surrogate welfare maximization problem. 

\begin{lemma} The greedy algorithm defined by Algorithm~\ref{alg_greedyassignment} outputs a solution that is a $\frac{1}{2}$-approximation for the submodular welfare maximization problem of maximizing $v(S_1, S_2, \ldots, S_m)$ over partitions of items $(S_1, S_2, \ldots, S_m)$ that satisfy cardinality constraints. 
\label{lem:2}
\end{lemma}

The proof of Lemma~\ref{lem:2} can be found in Appendix~\ref{alg_greedyassignment}. The proof is similar in spirit to that of the $\frac{1}{2}$-approximate greedy algorithm for submodular welfare maximization proposed by~\cite{lehmann2006combinatorial}. Unfortunately, one cannot directly utilize the arguments in that paper since the sketch function that we seek to optimize---$v_j(S_j)$---may not be submodular. Instead, we present a novel montonicity argument and leverage it to provide the following upper and lower bounds: $v_j(S_j) \geq v_j(S_j \setminus \{\pi_r(S_j, j)\}) \geq v_j(S_j) - \frac{a^r_{\pi_r(S_j, j),j}}{|r|}$ for all $S_j \subseteq N$ and $1 \leq r \leq k_j$. Finally, we apply these bounds in a cascading manner to show the desired $\frac{1}{2}$-approximation factor claimed in Lemma~\ref{lem:2}.

\section{Discussion and Additional Results}
\label{sec:disc}

In this section we first illustrate the use of test scores and discuss numerical results for the simple example introduced in Section~\ref{sec:intro}. We then discuss performance of simple test scores, namely mean and quantile scores, and characterize their performance for the constant elasticity of substitution value function. Finally, we discuss why for the stochastic monotone submodular welfare maximization problem we have to use different sketch functions than those we used for the stochastic monotone submodular function maximization problem. 

\subsection{Numerical Results for a Simple Illustrative Example}
\label{sec:disc1}

We consider the example of two types of items that we introduced in Section~\ref{sec:intro}. Recall, in this example the ground set of items is partitioned in two sets $A$ and $B$ with set $A$ comprising of safe items whose each individual performance is of value $a$ with probability $1$ and set $B$ comprising of risky items whose each individual performance is of value $b/p$ with probability $p$, and value $0$, otherwise. Here $a$, $b$, and $p$ are parameters such that $a,b>0$ and $p\in (0,1]$. We assume that $b \geq a$ and $|A|, |B|\geq k$. The value function is assumed to be the best-shot value function. 

We say that a set $S$ of items of cardinality $k$ is of type $r$ if it contains exactly $r$ risky items for $r = 0, 1,\ldots, k$. For each $r\in \{0,1\ldots, k\}$, let $S_r$ denote an arbitrary type-$r$ set. The realized value of set $S_r$ is $b/p$ if at least one risky item in $S_r$ achieves value $b/p$ and is equal to $a$, otherwise. Hence, we have
$$
u(S_r) = a(1-p)^r + \frac{b}{p}(1-(1-p)^r).
$$
Notice that the value of $u(S_r)$ monotonically increases in $r$, hence it is optimal to select a set that comprises of $k$ risky items, i.e. a set of type $k$. 

We consider sample-average replication test scores, which for a given number of samples per item replica $T\geq 1 $, are defined as
$$
\hat{a}_i = \frac{1}{T}\sum_{t=1}^T \max\{X_i^{(t,1)}, X_i^{(t,2)}, \ldots, X_i^{(t,k)}\}
$$
where $X_i^{(t,j)}$ are independent samples over $i$, $t$ and $j$ with $X_i^{(t,j)}$ sampled from distribution $P_i$. The output of the test score algorithm consists of a set of $k$ items with highest sample-average replication test scores. The output results in an error if, and only if, it contains at least one safe item. We evaluate the probability of error of the algorithm by running the test score algorithm for a number of repeated experiments. 

\begin{figure}[t]
\begin{center}
\includegraphics[width=0.45\linewidth]{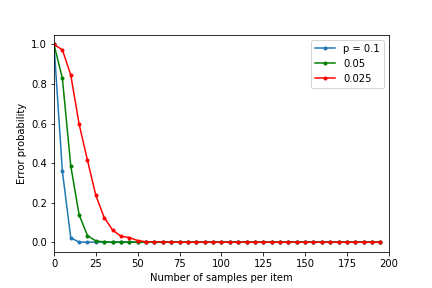}
\includegraphics[width=0.45\linewidth]{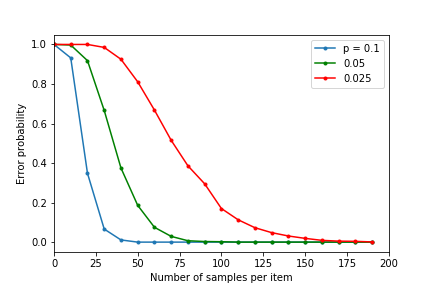}
\\
\caption{Probability of error of the test score algorithm versus the number of samples per item for (left) $k = 5$ and (b) $k = 10$, in each case for different values of parameter $p = 0.025, 0.05$ and $0.1$. Other parameters are set as $|A| = |B| = 10$, $a=1$ and $b=2$. The results are for the number repeated experiments equal to $1000$. }
\label{fig:ts_vs_T}
\end{center}
\end{figure}

In Figure~\ref{fig:ts_vs_T}, we show the probability of error versus the number of samples per item, for different values of parameters $k$ and $p$. Notice that the number of samples per item is equal to $Tk$ where $T$ is the number of samples per item replica. We observe that (a) the probability of error decreases with the number of samples per item, (b) the probability of error is larger for larger set size, and (c) the number of samples per item required to achieve a fixed value of probability of error increases with the risk of item values, i.e. for smaller values of parameter $p$. In Figure~\ref{fig:ts_vs_p}, we show the probability of error versus the value of parameter $p$, for different values of parameters $k$ and $T$. This further illustrates that a larger number of samples is needed to achieve a given probability of error the later the risk of items. In fact, one can show that a sufficient number of samples per item is $O((k/p^2)\log(n/\delta))$ to guarantee that the probability of error is of value at most $\delta$; we provide details in Appendix.  

\begin{figure}[t]
\begin{center}
\includegraphics[width=0.45\linewidth]{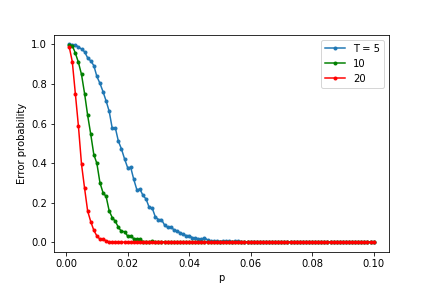}
\includegraphics[width=0.45\linewidth]{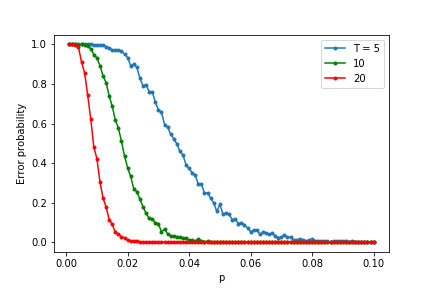}
\\
\caption{Probability of error of the test score algorithm versus the value of parameter $p$ for (left) $k = 5$ and (right) $k = 10$, in each case for different number of samples per item replica $T=5$, $10$ and $20$. Other parameters are set as given in the caption of Figure~\ref{fig:ts_vs_T}. }
\label{fig:ts_vs_p}
\end{center}
\end{figure}

We further consider a sample averaging method that amounts to enumerating feasible sets of items, for each feasible set $S$ of items estimating the value of $u(S)$, and selecting a set with largest estimated value. The value of $u(S)$ is estimated by the estimator defined as
$$
\hat{u}(S) = \frac{1}{T}\sum_{t=1}^T \max\{X_i^{(t)}\mid i\in S\}
$$
where $X_i^{(t)}$ are independent samples over $i$ and $t$ with $X_i^{(t)}$ sampled from distribution $P_i$.

In Figure~\ref{fig:saa_vs_T} we show the probability of error versus the number of samples per item for the test score algorithm and the sample averaging approach (SAA). We observe that the probability of error is larger for the SAA method. Intuitively, this happens because the SAA method amounts to comparison of all possible sets of items of different types, while the test score method for replication test scores amounts to comparison of sets that consists of either all safe or all risky items. The SAA method is computationally expensive as it requires enumeration of $\binom{n}{k}$ sets of items, which is prohibitive in all cases but for small values of parameter $k$. For the example under consideration, the number of samples per item needed to guarantee a given probability of error can be analytically characterized; we show this in Appendix. The sufficient number of samples per item scales in the same as way as for the test score algorithm, for fixed value of $k$ and asymptotically small values of parameter $p$, but for a fixed value of $p$ increases exponentially in parameter $k$.  

\begin{figure}[t]
\begin{center}
\includegraphics[width=0.35\linewidth]{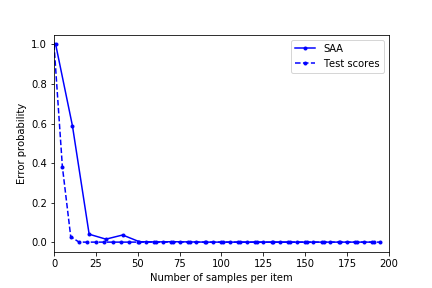}\hspace*{-5mm}
\includegraphics[width=0.35\linewidth]{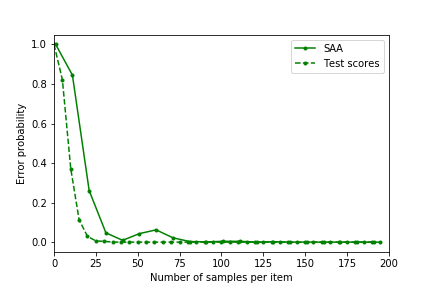}\hspace*{-5mm}
\includegraphics[width=0.35\linewidth]{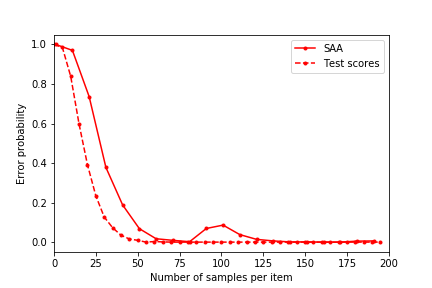}
\caption{Probability of error versus the number of samples per item for SAA and test score algorithms, for (left) $p=0.1$, (middle) $p=0.05$, and (right) $p=0.025$. The setting of other parameters is as in Figure~\ref{fig:ts_vs_T} for $k = 5$.}
\label{fig:saa_vs_T}
\end{center}
\end{figure}

In summary, our numerical results demonstrate the efficiency of the test score algorithm for different values of parameters and in comparison with the sample averaging approach.

\subsection{Mean and Quantile Test Scores}
\label{sec:mean}

As we already mentioned, the mean test scores are defined as expected values $a_i = \E[X_i]$. The quantile test scores are defined as $a_i = \E[X_i\mid P_i(X_i)\geq \theta]$, for a parameter $\theta \in [0,1]$. For the value of parameter $\theta = 0$, the quantile test score corresponds to mean test score. In general, the quantile test score is the expected individual performance of an item conditional on it being larger than a threshold value. 

Neither mean test scores nor quantile test scores can guarantee a constant-factor approximation for the submodular function maximization problem. We demonstrate this by two simple examples that convey intuitive explanations on why these test scores can fail to provide desired guarantee. We then present tight approximation bounds for the CES utility functions.

\noindent {\bf Example 1 (mean test scores):} Suppose that the utility is according to the best-shot function and that the selection is greedy using mean test scores. Suppose that there are two types of items: (a) deterministic performance items whose each individual performance is of value $1$ with probability $1$ and (b) random performance items whose individual performances are independent with expected value strictly smaller than $1$ and a strictly positive probability of being larger than $1$. Then, the algorithm will select all items to be those with deterministic performance. This is clearly suboptimal under the best-shot production where having selected an item with deterministic performance, the only way to increase the performance of a set of items with some probability is to select an item with random performance. Such an instance can be chosen such that the algorithm yields the utility that is only factor $O(1/k)$ of the optimum value.

\noindent {\bf Example 2 (quantile test scores):} Suppose that the utility function is the sum of individual performances and consider greedy selection with respect to quantile test scores with threshold parameters $\theta_{i} = 1-1/k$. Suppose there are two types of items: (a) deterministic performance items whose each individual performance is of value $1$ with probability $1$ and (b) random performance items whose individual performances are independent of value $a > 1$ with probability $p > 1/k$ and otherwise equal to zero. For random performance items, the mean test score is of value $a p$ and the quantile test score is of value $a$. The algorithm will choose all items to be random performance items, which yields the utility of value $k a p$. On the other hand, choosing items that output deterministic performance, yields the utility of value $k$. Since $a$ and $p$ can be chosen to be arbitrarily near to values $1$ and $1/k$, respectively, we observe that the algorithm yields the utility that is $O(1/k)$ of the optimum value.  

We next present a tight approximation bound for the CES utility function with parameter $r\geq 1$. Recall that the CES utility production provides an interpolation between two extreme cases: a linear function (for $r = 1)$ and the best-shot function (for the limit as $r$ goes to infinity). Intuitively, we would expect that greedy selection with respect to mean test scores would perform well for small values of parameter $r$, but that the approximation would get worse by increasing parameter $r$. The following theorem makes this intuition precise. 

\begin{proposition}[mean test scores] Suppose that the utility function $u$ is according to the CES production function with parameter $r\geq 1$. For given cardinality parameter $k\geq 1$, let $M$ be a set of $k$ items in $N$ with highest mean test scores. Then, we have 
$$
u(M) \geq \frac{1}{k^{1-1/r}}u(\OPT).
$$  
Moreover, this bound is tight.
\label{thm:cesmeantestscores}
\end{proposition}

The proof of Proposition~\ref{thm:cesmeantestscores} is provided in Appendix~\ref{sec:cesmeantestscores}. The proposition shows how the approximation factor decreases with the value of parameter $r$. In the limit of asymptotically large $r$, the approximation factor goes to $1/k$. This coincides with the approximation factor obtained for the best-shot function in \cite{kleinberg2015team}.

Intuitively, we would expect that quantile test scores would yield a good approximation guarantee for the CES utility function with large enough parameter $r$. This is because we know that for the best-shot utility function, the quantile test scores can guarantee a constant-factor approximation, which was established in \cite{kleinberg2015team}. The following theorems makes this intuition precise.
		
\begin{proposition}[quantile test scores] Suppose that the utility is according to the CES production function with parameter $r$ and that the selection is greedy using quantile test scores with $\theta = 1-c/k$ and $c > 0$. Then, we have
\begin{enumerate}
\item[(a)] if $r = o(\log(k) )$ and $r>1$, the quantile test scores cannot guarantee a constant-factor approximation for any value of parameter $c > 0$;
\item[(b)] if $r = \Omega(\log(k))$, the quantile test scores with $c = 1$ guarantee a constant-factor approximation.
\end{enumerate}
\label{thm:quant}
\end{proposition}

The proof of Proposition~\ref{thm:quant} is provided in Appendix~\ref{sec:quant}. The proposition establishes that quantile test scores can guarantee a constant-factor approximation if, and only if, the parameter $r$ is larger than a threshold whose value is $\Theta(\log(k))$.

\subsection{Sketch Functions used for the Welfare Maximization Problem}

In Section~\ref{sec:subm-welf-maxim} we established an approximation guarantee for the stochastic monotone submodular welfare maximization problem using the concept of strong sketches of set functions. This is in contrast to Section~\ref{sec:SFM} where used non-strong sketches for the submodular function maximization problem. One may wonder whether we could have used the theory of good test scores developed for submodular function maximization for the more general problem of submodular welfare maximization. Specifically, given an instance, one may have used the characterization in Definition~\ref{def:weaksketch} to maximize either $\underline{v}(S_1, S_2, \ldots, S_m) = \sum_{j=1}^m \underline{v}_j(S_j)$ or $\bar{v}(S_1, S_2, \ldots, S_m) =\sum_{j=1}^m \bar{v}_j(S_j)$, with $\underline{v}_j$ and $\bar{v}_j$ as defined in (\ref{equ:minmax}), over all feasible assignments. However, as we show next, such approaches can lead to highly sub-optimal assignments even for simple instances. 

\paragraph{Example 1:} Suppose we use an algorithm for maximizing the welfare function $\underline{v}(S_1, S_2,\ldots, S_m)$ subject to cardinality constraints. 

Consider a problem instance with $n = r^2$ items and $m=r$ partitions with each partition having a cardinality constraint with $k_j = r$ for all $r$. All items are assumed to exhibit deterministic performance: $r$ items (referred to as heavy items) have performance of value $1$, i.e., $X_i = 1$ with probability $1$, while the remaining items have performance of zero value. Assume that value functions are best-short functions $g_j(S) = \max\{x_i\mid i\in S\}$ for each partition $j$. 
		
The optimum solution for the given problem instance is when each of the heavy items is assigned to a different partition, leading to the welfare of value $r$. On the contrary, the algorithm assigns all heavy items to same partition, which yields a welfare of value $1$. Hence, the algorithm achieves the welfare that is $1/\sqrt{n}$ factor of the optimum, which can be made arbitrarily small by choosing large enough number of items $n$. 
		
\paragraph{Example 2:} Suppose now that we use an algorithm for maximizing the welfare function $\bar{v}(S_1, S_2, \ldots, S_m)$ subject to cardinality constraints. 

Consider a problem instance with $n = 2r$ items and $m = r+1$ partitions, where partition $1$ has a cardinality constraint with $k_1 = r$, and each partition $1 < j\leq m$ has $k_j = 1$. All items are assumed to have deterministic performance once again: one heavy item with performance of value $\sqrt{r}$,  $r-1$ medium items with performance of value of $1$, and, finally, the remaining items with zero-valued performance. Assume that value functions are $g_1(\vec{x}) = \sum_{i=1}^n x_i$ and $g_j(\vec{x}) = (1/\sqrt{r})\max\{x_i\mid i=1,2,\ldots,n\}$ for partitions $1 < j \leq m$. 
			
The optimum solution assigns all items to partition $1$, which yields a welfare of value $r+\sqrt{r}-1$, whereas the algorithm assigns the heavy item to partition $1$ and the medium items spread across, which yields a welfare of value less than $2\sqrt{r}$.	Hence, the algorithm achieves the welfare which is less than $2\sqrt{2}/\sqrt{n}$ of the optimum welfare, which can be made arbitrarily small by taking large enough number of items $n$. 
\section{Conclusion}
\label{sec:conc}

In this work, we presented a new algorithmic approach for the canonical problem of (stochastic) submodular maximization known as test score algorithms. These algorithms are particularly appealing due to their simplicity and natural interpretation as their decisions are contingent only on individual item scores that are computed based on the distribution that captures the uncertainty in the respective item's performance. Although test score based methods have been studied in an ad-hoc manner in previous literature~\citep{kleinberg2015team}, our work presents the first systematic framework for solving a broad class of stochastic combinatorial optimization problems by approximating complex set functions using simpler test score based sketch functions. By leveraging this framework, we show that it is possible to obtain good approximations under a natural (extended) diminishing returns property, namely: $(i)$ a constant factor approximation for the problem of maximizing a stochastic submodular function subject to a cardinality constraint, and $(ii)$ a logarithmic-approximation guarantee for the more general stochastic submodular welfare maximization problem. It is worth noting that since test score algorithms represent a more restrictive computational model, the guarantees obtained in this paper are not as good as those of the best known algorithms for both these problems. However, test score based approaches provide three key advantages over more traditional algorithms that make them highly desirable in practical situations relating to online platforms:

\begin{itemize}
\item \emph{Scalability}: The test score of an item depends only on its own performance distribution. Therefore, when new items are added or existing items are removed from the ground set, this does not alter the scores of any other items. Since our algorithm selects items with the highest test scores, its output would only require simple swaps when the ground set changes. 

\item \emph{Distributed Implementation}: Test score algorithms can be easily parallelized as the test score of an item can be computed independently of the performance distribution of other items. Moreover, the final algorithm itself involves a trivial greedy selection and does not require any complex communication between parallel machines. 

\item \emph{Fewer Oracle Calls}: Test score algorithms only query the value of the function $E[g(\vec{x})]$ once per item---$n$ oracle calls in total---which is an order of magnitude smaller than the number required by traditional approaches. Moreover, these oracle calls are simple in that they do not require drawing samples from the distributions of multiple items, which may be expensive.  
\end{itemize}


Future work may consider lower bounds for test score-based algorithms for different sub-classes of monotone stochastic submodular set functions. In particular, it would be of interest to consider instances of set functions that do not belong to the class of set functions identified in this paper. It is also of interest to consider tightness (inapproximability) of approximation factors. Finally, it would also be of interest to study approximation guarantees when using statistical estimators for test scores, and not expected values as in this paper.

%

%

\bibliographystyle{informs2014}

\bibliography{reference}

\begin{APPENDIX}{Proofs and Additional Results}

\section{Validation of the Extended Diminishing Returns Property}
\label{sec:check}
		
It is easy to verify that all value functions defined in Section~\ref{sec:val} are such that their expected values are non-negative, monotone submodular set functions. We next show that all these value functions also satisfy the extended diminishing returns condition, formally defined in Definition~\ref{def:esa}. 
		
We need to check that a value function $g$ is such that whenever for given $v\in \reals_+$ there exists $\vec{y}'\in \reals_+^d$ such that $g(\vec{y}')=v$, then there exists $\vec{y}=(y_1,\ldots,y_d)^\top \in \reals_+^d$ such that $g(\vec{y}) = v$ and for all $\vec{x} = (x_1, \ldots, x_d)^\top\in \reals_+^d$ such that $g(\vec{x}) \leq g(\vec{y})$, it holds
\begin{equation}
g(x_1, \ldots, x_d, z) - g(x_1, \ldots, x_d) \geq g(y_1,\ldots,y_d,z) - g(y_1, \ldots, y_d), \hbox{ for all } z\in \reals_+.
\label{equ:appcond1}
\end{equation}
We first prove that for all of the functions defined in Section~\ref{sec:val} except for top-$r$ with $r > 1$ satisfy a stronger version of the above condition which is true for all points $\vec{y}\in \reals_+^d$ such that $g(\vec{y}) = v$. According to the stronger condition, for every $\vec{x}, \vec{y}\in \reals_+^d$ such that $g(\vec{x}) \leq g(\vec{y})$, it holds:
\begin{equation}
g(x_1, \ldots, x_d, z) - g(x_1, \ldots, x_d) \geq g(y_1,\ldots,y_d,z) - g(y_1, \ldots, y_d), \hbox{ for all } z\in \reals_+.
\label{equ:appcond2}
\end{equation}

We begin by proving that all of the functions defined in Section~\ref{sec:val} except top-$r$ satisfy the stronger condition as per~\eqref{equ:appcond2}.

		\paragraph{Total production: $g(\vec{x})=\bar{g}(\sum_{i=1}^n x_i)$} In this case, $g(\vec{x}) \leq g(\vec{y})$ is equivalent to $\sum_{i=1}^d x_i \leq \sum_{i=1}^d y_i$ and (\ref{equ:appcond2}) is equivalent to 
		$$
		\bar{g}\left(\sum_{i=1}^d x_i + z\right) - \bar{g}\left(\sum_{i=1}^d x_i\right) \geq \bar{g}\left(\sum_{i=1}^d y_i + z\right) - \bar{g}\left(\sum_{i=1}^d y_i\right), \hbox{ for all } z\in \reals_+.
		$$

Let $x = \sum_{i=1}^d x_i$ and $y = \sum_{i=1}^d y_i$. With this new notation, the extended diminishing returns condition is equivalent to saying that for all $x,y\in \reals_+$ such that $x\leq y$, 
$$
\bar{g}(x+z)-\bar{g}(x) \geq \bar{g}(y+z)-\bar{g}(y), \hbox{ for all } z\in \reals_+
$$
which obviously holds true because $\bar{g}$ is assumed to be a monotone increasing and concave function.
		
		\paragraph{Best-shot: $g(\vec{x}) = \max\{x_1,x_2,\ldots,x_n\}$.} In this case, $g(\vec{x}) \leq g(\vec{y})$ is equivalent to
$$
\max\{x_1,\ldots,x_d\}\leq \max\{y_1,\ldots,y_d\}
$$
and (\ref{equ:appcond2}) is equivalent to
		$$
		\max\{x_1,\ldots,x_d,z\} - \max\{x_1,\ldots,x_d\} \geq \max\{y_1,\ldots,y_d,z\} - \max\{y_1,\ldots,y_d\} \hbox{ for all } z\in \reals_+.
		$$
		
We consider three different cases. 

\begin{itemize}
\item Case 1: $z\geq \max\{y_1,\ldots,y_d\}$. In this case, $\max\{\vec{x},z\} - \max\{\vec{x}\} = z - \max\{\vec{x}\} \geq z - \max\{\vec{y}\} = \max\{\vec{y},z\} - \max\{\vec{y}\}$. Hence, extended diminishing returns holds. 

\item Case 2: $\max\{x_1,\ldots, x_d\} \leq z < \max\{y_1,\ldots,y_d\}$. In this case, condition (\ref{equ:appcond2}) is equivalent to $z\geq \max\{x_1,\ldots,x_d\}$, which holds by assumption. 

\item Case 3: $z < \max\{x_1,\ldots,x_d\}$. In this case, condition (\ref{equ:appcond2}) is equivalent to $0 \geq 0$ and thus trivially holds. 
\end{itemize}

\paragraph{CES: $g(\vec{x}) = (\sum_{i=1}^n x_i^r)^{1/r}$, for parameter $r\geq 1$.} Let $x = \sum_{i=1}^d x_i^r$, $y = \sum_{i=1}^r y_i^r$ and $w = z^p$. Condition (\ref{equ:appcond2}) is equivalent to
		$$
		(x+w)^{1/r} - x^{1/r} \geq (y+w)^{1/r} - y^{1/r}
		$$
while $g(\vec{x}) \leq g(\vec{y})$ is equivalent to $x \leq y$. Since $r \geq 1$, the function $f(x) = x^{1/r}$ is an increasing concave function. Hence, it follows that condition (\ref{equ:appcond2}) holds as long as $g(\vec{x}) \leq g(\vec{y})$.
		 
\paragraph{Success-probability: $g(\vec{x}) = 1-\prod_{i=1}^n(1-p(x_i))$} By a simple algebra, condition (\ref{equ:appcond2}) is equivalent to
		$$
		\prod_{i=1}^d p(x_i)(1-p(z)) \geq \prod_{i=1}^d p(y_i)(1-p(z))
		$$
	while $g(\vec{x}) \leq g(\vec{y})$ is equivalent to
		$$
		\prod_{i=1}^d p(x_i)\geq \prod_{i=1}^d p(y_i).
		$$
Hence, condition (\ref{equ:appcond2}) holds as long as $g(\vec{x}) \leq g(\vec{y})$.

Finally, we prove that the top-$r$ function satisfies~\eqref{equ:appcond1} for $r > 1$. Recall that when $r=1$, top-$r$ coincides with the best-shot function, for which we already showed that the extended diminishing returns condition holds. 
	
\paragraph{Top-$r$: $g(\vec{x}) = \sum_{i=1}^r x_{(i)}$, where $x_{(i)}$ is the $i$--th largest element in $\vec{x}$.} Fix $v\in \reals_+$. Without loss of generality, suppose that$d \geq r$ and define $\vec{y} = (y_1, \ldots, y_d)^\top \in \reals^d$ such that $y_j = v/r$ for $1 \leq j \leq r$ and $y_j = 0$ for all $r < j \leq d$.\footnote{The proof when $r < d$ is trivial because $g(\vec{x},z) - g(\vec{x}) = g(\vec{y},z) - g(\vec{y})=z$.} Clearly, $g(\vec{y}) = v$.

Let $\vec{x} \in \reals_+^d$ be any point such that $g(\vec{x}) \leq g(\vec{y})$.  We prove~\eqref{equ:appcond1} for the following two different cases:
\begin{itemize}
\item Case 1: $z\geq v/r$: In this case, $g(\vec{y},z) - g(\vec{y}) = z - v/r$. Since $g(\vec{x}) \leq g(\vec{y})$, it must be the case that the $r$-th largest element in $\vec{x}$, i.e. $x_{(r)}$, is smaller than or equal to $g(\vec{y})/r = v/r$. Thus, we have that $g(\vec{x},z) - g(\vec{x}) = z - x_{(r)} \geq z - g(\vec{y})/r = g(\vec{y},z) - g(\vec{y})$ and so, the claim follows.
\item Case 2: $z \leq v/r$: The claim trivially follows in this case because $g(\vec{y},z) = g(\vec{y})$ and so,  $g(\vec{y},z) - g(\vec{y}) = 0$, whereas $g(\vec{x},z) - g(\vec{x}) \geq 0$.
\end{itemize}

\section{Proof of Lemma~\ref{lem:weaksketchguar}}
\label{app:weaksketchguar}
	
We first note the following inequalities
		
		$$
		u(\OPT) \leq u(S^*) + u(\OPT\setminus S^*) \leq u(S^*) + \chigh\bar{v}(\OPT \setminus S^*).
		$$
		
		The first inequality comes from the fact that all submodular functions are subadditive, i.e. for any submodular set function $u$, it holds $u(A \cup B) \leq u(A) + u(B)$. The second inequality comes from the sketch upper bound. 
		
		Now, consider any set $T$ of cardinality $k$ such that $\OPT \setminus S^* \subseteq T$ that is disjoint from $S^*$, i.e. $S^*\cap T = \emptyset$. By the condition of the lemma, we have that $\bar{v}(T) \leq \underline{v}(S^*)$ and $p\underline{v}(S^*) \leq u(S^*)$. Therefore, we have
		
		$$u(\OPT) \leq u(S^*) + \chigh\underline{v}(S^*) \leq u(S^*) + \frac{\chigh}{\clow}u(S^*)$$
		
		which completes the proof.
		
\section{Proof of Lemma~\ref{thm:good-test}}
\label{thm:good-test-app}

Suppose that a set function $u$ has $(\clow,\chigh)$-good test scores $a_1,a_2,\ldots, a_n$, i.e. for every $S\subseteq N$ such that $|S| = k$,
		\begin{equation}
		\clow \min\{a_i\mid i\in S\} \leq u(S) \leq \chigh \max\{a_i\mid i\in S \}. 
		\label{eq:t1}
		\end{equation}
		
		Let $r_1,r_2,\ldots,r_n$ be replication test scores, i.e.\footnote{Hereinafter, we slightly abuse the notation by writing $u(S)$ for a set of item $i$ replicas $S = \{i^{(1)}, \ldots, i^{(k)}\}$ while $u$ is defined as a set function over $2^N$. A proper definition would extend the definition of $u$ over $2^{\tilde{N}}$ where $\tilde{N}$ includes $n$ instances of each item $i\in N$ but this would be at the expense of more complex notation.} 
		\begin{equation}
		r_i = \E[g(X_i^{(1)},\dots,X_i^{(k)},\phi,\ldots,\phi)] = u(\{i^{(1)},\dots,i^{(k)}\})
		\label{eq:rpl}
		\end{equation}
		where $X_i^{(1)}, X_i^{(2)}, \ldots, X_i^{(k)}$ are independent random variables with distribution $P_i$ and $i^{(1)}, i^{(2)},\ldots,i^{(k)}$ are independent replicas of item $i$.
		
		By assumption, $a_1,a_2,\ldots,a_n$ are $(\clow,\chigh)$-good test scores, hence 
		\begin{equation}
		\clow a_i \leq u(\{i^{(1)},\dots,i^{(k)}\}) \leq \chigh a_i.
		\label{eq:t2}
		\end{equation}
		
		From \eqref{eq:t1}, \eqref{eq:rpl}, and \eqref{eq:t2}, we have that for every $S\subseteq N$ such that $|S| = k$,
		$$
		\frac{\clow}{\chigh} \min\{r_i\mid i\in S\} \leq \clow \min\{a_i\mid i\in S\} \leq u(S) \leq \chigh \max\{a_i\mid i\in S\} \leq \frac{\chigh}{\clow}\max\{r_i\mid i\in S\}
		$$
		which implies that replication test scores are $(p/q,q/p)$-good test scores.

\section{Proof of Lemma~\ref{thm:subtest}}
\label{app:repisgood}

We first prove the lower bound and and then the upper bound as follows.
		
\paragraph{Proof of the lower bound.} Without loss of generality, let us consider the set $S = \{1,2,\ldots,k\}$ and assume that $a_{1} = \min\{a_{i}\mid i\in S\}$. We claim that 
\begin{equation}
u(\{1,\ldots,j\}) \geq \left(1-\frac{1}{k} \right)u(\{1,\dots,j-1 \}) + \frac{1}{k}a_{1} \hbox{ for all } j\in \{1,2,\ldots,k\}.
\label{equ:claimu}
\end{equation} 
		
From this, we can use a cascading argument to show that $u(S) \geq (1-(1-\frac{1}{k})^k) a_{1} \geq (1-\frac{1}{e})a_{1}$. 

We begin by proving the claim by (\ref{equ:claimu}). For $j = 1$, since $u$ is a non-negative, monotone submodular set function, we have
		\begin{equation}
		u(\{ 1\}) = \frac{1}{k}\sum_{t=1}^k u(\{ 1^{(t)}\})  \ge \frac{1}{k} u(\{1^{(1)},\dots,1^{(k)} \}) = \frac{1}{k}a_{1}. 
		\label{eq:3}
		\end{equation}
For $j > 1$, we have
		\begin{eqnarray}
		u(\{1,\dots,j \}) & = &  u(\{1,\dots,j-1 \}) + [u(\{1,\dots,j \})-u(\{1,\dots,j-1 \})]\nonumber\\
		& \stackrel{(a)}{\ge} & u(\{1,\dots,j-1 \}) +\frac{1}{k} [u(\{1,\dots,j-1,j^{(1)},\dots,j^{(k)} \})-u(\{1,\dots,j-1 \})] \nonumber\\
		& \stackrel{(b)}{\ge} & u(\{1,\dots,j-1 \}) +\frac{1}{k}[u(\{j^{(1)},\dots,j^{(k)} \})-u(\{1,\dots,j-1\})]\nonumber \\
		& = & \left(1-\frac{1}{k} \right)u(\{1,\dots,j-1 \}) + \frac{1}{k}a_{j}\nonumber \\
		& \ge & \left(1-\frac{1}{k} \right)u(\{1,\dots,j-1 \}) + \frac{1}{k}a_{1}\label{eq:2}
		\end{eqnarray}
where $(a)$ follows by submodularity of $u$ and $(b)$ follows by non-negativity and monotonicity of $u$. 
		
We now proceed with the cascading argument:
		\begin{align*}
		u(\{1,\dots,k\})  \geq & ~ \left(1-\frac{1}{k} \right)u(\{1,\dots,k-1 \}) + \frac{1}{k}a_{1}  \cr
		\geq & ~ \left(1-\frac{1}{k}\right)^2 u(\{1,\dots,k-2 \}) + \left(1-\frac{1}{k}\right)\frac{a_{1}}{k} + \frac{a_{1}}{k} \cr
		\geq & ~ \ldots \cr
		\geq & ~ \frac{a_{1}}{k}\left( \sum_{j=0}^{k-1} \left(1-\frac{1}{k}\right)^j \right) \cr
		\geq & ~ a_{1} \left( 1-\left(1-\frac{1}{k} \right)^{k} \right)\cr
		\geq & ~  \left(1-\frac{1}{e} \right)a_{1}.
		\end{align*}
		
		For the last step, we use the fact that $\left(1-1/k \right)^{k} \leq 1/e$, for all $k \geq 1$.
				
\paragraph{Proof of the upper bound.} Without loss of generality, assume that $S=\{1,2,\dots,k \}$ and $a_{1} \le a_{2} \le \dots \le a_{k}$. Recall that the value function $g$ is defined on $\reals^n$. We will slightly abuse notation by writing $g(\vec{y})$ to denote $g(\vec{y},\phi,\ldots,\phi)$, for any vector $\vec{y}$ of dimension $1 \leq d < n$, where $\phi$ is some minimal-value element defined in Section~\ref{sec:problem}. Moreover, for convenience, we will assume that the value function $g$ is continuous on any given dimension.

		
Define $g^{max}_i$ to be the maximum value of the submodular function $g$ on a vector of dimension $i$, i.e., 
$$
g^{\max}_i = \max_{z_1, z_2, \ldots, z_i\in \reals_+} g(z_1, z_2, \ldots, z_i).
$$

		
Suppose that $v = \min\{ c a_k, g^{\max}_{k-1}\}$, for some constant $c > 1$ whose value we will determine later. We first claim that there exists at least one vector $\vec{z}$ such that $g(\vec{z}) = v$. Our proof will leverage this vector $\vec{z}$ as follows. We consider a fictitious set of items $S^*$ whose individual performances correspond to $\vec{z}$ and show that the marginal benefit of adding an item $i \in N$ to this fictitious set is at most twice the marginal value of adding item $i$ to a set comprising of $k-1$ replicas of item $i$. This allows us to establish an upper bound in terms of the test scores. Although $g(\vec{z}) = v = ca_k$ is sufficient for our proof to hold, it is possible that the function $g$ is capped at a value smaller than $ca_k$ and there does not exist any $\vec{z}$ satisfying $g(\vec{z}) = ca_k$. To handle this corner case, we define $v$ to be the minimum of $ca_k$ and $g^{\max}_{k-1}$. 

We now prove the above claim that $v$ has a non-empty preimage under $g$. When $v = g^{\max}_{k-1}$, the claim follows trivially since by the definition of $g^{\max}_i$, there exists a $(k-1)$-dimension vector whose function value is $g^{\max}_{k-1}$. On the other hand, when $c a_k < g^{\max}_{k-1}$, this comes from continuity arguments since we know that there exist points in $\reals^{k-1}_+$ where $g$ evaluates to values greater than and smaller than $v$ respectively. In summary, there exists at least one point where the function evaluates to $v$. Since $g$ satisfies the extended diminishing returns property, we can abuse notation and infer from the definition that there exists a vector\footnote{Note that some elements of this vector can be $\phi$ or zero} $z \in \reals^{n-1}_+$ such that $g(\vec{z}) = v$ and for any $\vec{y} \in \reals^{k-1}_+$ having $g(\vec{y}) \leq g(\vec{z})$, it must be the case that
	\begin{equation}
	   g(\vec{z},x) - g(\vec{z}) \leq g(\vec{y},x) - g(\vec{y}), \hbox{ for all } x\in \reals_+.
\label{eqn_upperbound_extendedsub}
\end{equation}	

It is worth pointing out that while Definition~\ref{def:esa} guarantees that~\eqref{eqn_upperbound_extendedsub} holds when the vector $y$ is of dimension $n-1$, one can simply start with a $(k-1)$-dimension vector $\vec{y}$ and simply pad a sufficient number of $\phi$ elements to arrive upon a $(n-1)$-dimension vector whose value is still $g(\vec{y})$. Therefore, let $\vec{z} = (z_1, z_2, \ldots, z_{n-1})^\top$ be an arbitrary vector such that $g(\vec{z}) = v$ and that it satisfies~\eqref{eqn_upperbound_extendedsub} for any $\vec{y} \in \reals^{k-1}_+$, $x \geq 0$ as long as $g(\vec{y}) \leq g(\vec{z})$. Let $S^* = \{q_1, q_2, \ldots, q_{n-1}\}$ be a set of (fictitious) items such that $X_{q_j} = z_j$ with probability $1$ (performance of each of these fictitious items is deterministic). Therefore, the performance of the set of items $S^*$ is given by $$
		u(S^*) = g(\vec{z}) = \min\{c a_k, g^{max}_{k-1}\}.
$$ 
		
	 Since $u$ is a non-negative, increasing and submodular function, we have 
		\begin{eqnarray}
		u(S)& \leq & u(S^* \cup S) \\
		& \le & u(S^*) + \sum_{i=1}^{k} \left( u(S^* \cup\{i\}) - u(S^*) \right) \\
		&\leq & c a_{k} + \sum_{i=1}^{k} \left( u(S^* \cup\{i\}) - u(S^*) \right).
		\label{eq:5}
		\end{eqnarray}
		Let $X_i^{(1)}, X_i^{(2)}, \ldots, X_i^{(k)}$ be independent random variables with distribution $P_i$. Let $X_i = X_i^{(k)}$ and $\vec{Y}_i = (X_i^{(1)}, X_i^{(2)}, \ldots, X_i^{(k-1)})^\top$. Note that
		\begin{align}
		u(S^* \cup\{i\}) - u(S^*) & = \E\left[g(\vec{z},X_i) - g(\vec{z})\right]  \\
		& \stackrel{(a)}{=}~ \E\left[g(\vec{z},X_i) - g(\vec{z}) ~|~ g(\vec{Y}_i) \leq g(\vec{z}) \right] \\
		& \stackrel{(b)}{\leq}~ \E\left[g(\vec{Y}_i,X_i) -  g(\vec{Y}_i) ~|~ g(\vec{Y}_i) \leq g(\vec{z}) \right] \\
		& \leq \frac{u\left(\{i^{(1)},\dots,i^{(k)} \}\right) - u\left( \{i^{(1)},\dots,i^{(k-1)} \}\right)}{\Pr\big[g(\vec{Y}_i) \leq g(\vec{z})\big]} \\
		& \stackrel{(c)}{\le}~ \frac{1}{\Pr\big[g(\vec{Y}_i) \leq g(\vec{z})\big]}\frac{a_i}{k} \\
		&\stackrel{(d)}{\le}~ \left(1-\frac{1}{c}\right)^{-1}\frac{a_{k}}{k},
		\label{eq:6}
		\end{align}
where $(a)$ comes from the fact that, by definition, $X_i$ and $\vec{Y}_i$ are independent; the inequality $(b)$ follows from the extended diminishing returns property outlined in~\eqref{eqn_upperbound_extendedsub} for $\vec{y} = \vec{Y}_i$--note that for any instantiation $\vec{Y}_i$ where $g(\vec{Y}_i) \leq g(\vec{z})$, extended diminishing returns tells us that  $g(\vec{z},X_i) - g(\vec{z}) \leq g(\vec{Y}_i,X_i) -  g(\vec{Y}_i)$ for all $X_i$, thus taking the expectation over all $\vec{Y}_i, X_i$ conditional upon $g(\vec{Y}_i) \leq g(\vec{z})$ gives us $(b)$; inequality $(c)$ can be shown using only the definition of submodularity as can be seen via the below sequence of inequalities:
				\begin{align*}
				u\left(\{i^{(1)},\dots,i^{(k)} \}\right) - u\left( \{i^{(1)},\dots,i^{(k-1)}\right) & \leq \frac{1}{k}\sum_{j=0}^{k-1}\left(u(\{i^{(1)},\dots,i^{(j)},i^{(k)}\})   - u(\{i^{(1)},\dots,i^{(j)}\}) \right) \\
				& = \frac{1}{k}\sum_{j=0}^{k-1}\left(u(\{i^{(1)},\dots,i^{(j)},i^{(j+1)}\})   - u(\{i^{(1)},\dots,i^{(j)}\}) \right)\\
				& = \frac{1}{k}u(\{i^{(1)},\dots,i^{(k)}\})\\
				& = \frac{a_i}{k}.
				\end{align*}
							
It remains to prove (d) which follows by the fact $a_i\geq a_k$ for all $i\in \{1,2\ldots,k\}$ and showing that $\Pr\big[g(\vec{Y}_i) \leq g(\vec{z})\big] \geq 1-1/c$. Recall that $g(\vec{z}) = \min\{c a_k, g^{max}_{k-1}\}$. Let us proceed by separately considering two cases depending on the value of $g(\vec{z})$. If $g(\vec{z}) = g^{max}_{k-1}$, then $\Pr\big[g(\vec{Y}_i) \leq g(\vec{z})\big] =1$ trivially.  This is because by definition $g^{max}_{k-1}$ is the maximum value that the function can take for any vector of length $k-1$. On the other hand, when $g(\vec{z}) = c a_k$, we can apply Markov's inequality to obtain 					
\begin{equation*} 
\Pr[g(\vec{Y}_i)\geq c a_{k}] \leq \frac{\E [g(\vec{Y}_i)]}{c a_{k}}\leq \frac{\E [g(\vec{Y}_i,X_i)]}{c a_{k}}\le \frac{1}{c}. 
\end{equation*} 

Hence, it follows $\Pr[g(\vec{Y}_i)\leq c a_{k}]\geq 1 - \Pr[g(\vec{Y}_i)\geq c a_{k}]\geq 1-1/c$. Combining this with \eqref{eq:5} and \eqref{eq:6}, we obtain $u(S) \leq c a_k + (1-1/c)^{-1}a_k = (c^2/(c-1))a_k$. Since we can choose $c$ arbitrarily, by taking $c=2$, we obtain $u(S)\leq 4a_k$, which proves the upper bound.

\section{Proof of Lemma~\ref{clm_strskeproject}}
\label{app:strsketch}
		Suppose that $S^*$ is the optimum solution to the submodular welfare maximization problem with sketch utility functions $v_1, v_2,\ldots, v_m$, and  $v(S) \geq \alpha v(S^*)$. Then,				
		\begin{align*}
		u(\vec{\OPT}) &= \sum_{j=1}^m u_j(\OPT_j) \\
		& \leq \chigh \sum_{j=1}^m v_j(\OPT_j) \\
		& \leq \chigh \sum_{j=1}^m v_j(S^*_j) \quad \text{(since this solution is optimal for sketch utility functions)}\\
		& \leq \frac{1}{\alpha} \chigh \sum_{j=1}^m v_j(S_j)\\
		& \leq \frac{1}{\clow\alpha}\chigh \sum_{j=1}^m u_j(S_j).
		\end{align*}

\section{Proof of Lemma~\ref{thm:SMBsketch}}
\label{app_proof_lemmasmbsketch}

It suffices to consider an arbitrary partition $j$. To simplify the presentation, with a slight abuse of notation, we omit the index $j$ in our notation. 

Let $a_{1}^r, a_{2}^r,\ldots, a_{n}^r$ denote replication test scores for parameter $r$. For any set $S\subseteq N$ such that $|S| = k$, let $\pi(S) = (\pi_1(S), \pi_2(S),\ldots, \pi_k(S))$ be a permutation of the elements of $S$ defined in (\ref{equ:pi}). 

Let $v$ be a set function, which for any $S\subseteq N$ such that $|S| = k$ is defined by
\begin{equation}
v(S) = \sum_{r=1}^k \frac{1}{r}a_{\pi_r(S)}^r.
\label{equ:vstar}
\end{equation}

We need to establish the following relations, for every $S \subseteq N$,
\begin{equation}
\label{lowerbound}
u(S) \geq \frac{1}{2(\log(k)+1)} v(S)
\end{equation}
and
\begin{equation}
\label{upperbound}
u(S) \leq 6 v(S).
\end{equation}

\paragraph{Proof of lower bound (\ref{lowerbound})} Suppose that $S$ is of cardinality $k$ and define 
$$
\tau:= \argmax_t a_{\pi_t(S)}^t.
$$

We begin by noting the following basic property  of replication test scores. 

\begin{lemma} For replication test scores $a_1^r, a_2^r, \ldots, a_n^r$ for $1\leq r \leq k$, for every item $i\in \{1,2,\dots, k\}$, the following relations hold:
$$
\frac{a_{i}^s}{s} \geq \frac{a_{i}^t}{t}, \hbox{ for } 1 \leq s \leq t \leq k. 
$$
\label{lem_scoresubmodularity}
\end{lemma}

The assertion in Lemma~\ref{lem_scoresubmodularity} follows easily by the diminishing increments property of replication test scores $a_i^r$ with respect to parameter $r$.


In our proof, we will also need the following lemma:

\begin{lemma} For every set $S\subseteq N$ such that $|S| = k$ and ordering of items of this set $\pi(S) = (\pi_1(S), \pi_2(S), \ldots, \pi_k(S))$, the following relation holds:
$$
\frac{1}{\tau} \sum_{r=1}^{\tau} a_{\pi_{r}(S)}^r \geq \frac{1}{2}a_{\pi_{\tau}(S)}^\tau.
$$
\label{lem_scoremaxbound} 
\end{lemma}

The proof of the lemma is as follows. For every $r\in \{1,2,\dots,\tau\}$, we have
$$
\frac{a^r_{\pi_r(S)}}{r}\geq \frac{a^r_{\pi_\tau(S)}}{r}\geq \frac{a^\tau_{\pi_\tau(S)}}{\tau}
$$
where the first inequality is by definition of $\pi(S)$ and the second inequality is by Lemma~\ref{lem_scoresubmodularity}. Hence, we have

$$
\sum_{r=1}^{\tau} a_{\pi_r(S)}^r \geq \frac{a_{\pi_{\tau}(S)}^\tau}{\tau}\sum_{r=1}^{\tau} r \geq \frac{a_{\pi_{\tau}(S)}^\tau}{\tau}\frac{\tau(\tau+1)}{2} \geq a_{\pi_{\tau}(S)}^\tau \frac{\tau}{2}
$$
which corresponds to the claim of the lemma.

\begin{lemma} For every $S\subseteq N$, the following relation holds:
\begin{equation*}
u(S) \geq \frac{1}{\tau}\sum_{r=1}^{\tau} a_{\pi_{r}(S)}^r. 
\label{equ:uslb}
\end{equation*}
\label{lem:uslb}
\end{lemma}
The proof of Lemma~\ref{lem:uslb} is by induction as we show next. The inductive statement is $u(\{\pi_1(S), \ldots, \pi_r(S)\}) \geq \frac{1}{r}\sum_{s=1}^r a_{\pi_s(S)}^s$ for every $r\in \{1,2,\ldots, \tau\}$. Base case: $r = 1$. The base case indeed holds because by definition of replication test scores $u(\{\pi_1(S)\}) = a_{\pi_1(S)}^1$. Inductive step: assume that the statement is true up to $r-1$ and we need to show that it holds for $r$. We have the following relations: 
\begin{align*}
& u(\{\pi_1(S), \ldots, \pi_r(S)\}) - u(\{\pi_1(S), \ldots, \pi_{r-1}(S)\}) \\
& = \frac{1}{r}\Big(u(\{\pi_1(S), \ldots, \pi_{r-1}(S), \pi_r(S)^{(1)}\}) + \cdots + u(\{\pi_1(S), \ldots, \pi_{r-1}(S), \pi_r(S)^{(r)}\}) - r u(\{\pi_1(S), \ldots, \pi_{r-1}(S)\}) \Big)\\
& \geq \frac{1}{r}\Big(u(\{\pi_1(S), \ldots, \pi_{r-1}(S), \pi_r(S)^{(1)}, \ldots, \pi_r(S)^{(r)}\}) -  u(\{\pi_1(S), \ldots, \pi_{r-1}(S)\}) \Big) \\
& \geq  \frac{1}{r}\Big(u(\{\pi_r(S)^{(1)}, \ldots, \pi_r(S)^{(r)}\}) - u(\{\pi_1(S), \ldots, \pi_{r-1}(S)\}) \Big)\\
& = \frac{a_{\pi_r(S)}^r}{r} - \frac{u(\{\pi_1(S), \ldots, \pi_{r-1}(S)\})}{r}
\end{align*}
where the first and second inequality is by submodularity and monotonicity of set function $u$, respectively.

From the inductive hypothesis, we know that $u(\{\pi_1(S), \ldots, \pi_{r-1}(S)\}) \geq \frac{1}{r-1}\sum_{s=1}^{r-1}a_{\pi_s(S)}^s$, so we add $u(\{\pi_1(S), \ldots, \pi_{r-1}(S)\})$ to both sides of the above equation and obtain
$$
u(\{\pi_1(S), \ldots, \pi_{r}(S)\}) \geq \frac{a_{\pi_r(S)}^r}{r} + \frac{r-1}{r}u(\{\pi_1(S), \ldots, \pi_{r-1}(S)\}) \geq \frac{1}{r}\sum_{s=1}^r a_{\pi_s(S)}^s
$$
which proves the claim of Lemma~\ref{lem:uslb}. 

Now, combining Lemma~\ref{lem_scoremaxbound} and Lemma~\ref{lem:uslb}, we obtain $u(S) \geq a_{\pi_{\tau}(S)}^\tau/2$. 

Finally, we conclude the lower bound as follows:
$$
u(S) \geq \frac{1}{2}a_{\pi_{\tau}(S)}^\tau = \frac{a_{\pi_{\tau}(S)}^\tau}{2}  \frac{1 + \frac{1}{2} + \ldots + \frac{1}{k}}{1 + \frac{1}{2} + \ldots + \frac{1}{k}} \geq \frac{a_{\pi_1(S)}^1 + \frac{a_{\pi_2(S)}^2}{2} + \ldots + \frac{a_{\pi_k(S)}^k}{k}}{2(\log(k)+1)} = v(S)
$$
where in the last inequality we use the facts that $a_{\pi_{\tau}(S)}^\tau \geq a_{\pi_{r}(S)}^r$ for all $r$, and $1+1/2+\cdots + 1/k \leq \log(k)+1$, for all $k\geq 1$.

\paragraph{Proof of the upper bound (\ref{upperbound})} The proof of the upper bound is almost identical to the upper bound proof of Lemma~\ref{thm:subtest}. Once again, we will abuse notation by writing $g(\vec{y})$ instead of $g(\vec{y},\phi,\ldots,\phi)$ for any vector $\vec{y}$ of dimension $r < n$, where $\phi$ is some minimal-value element as defined in Section~\ref{sec:problem}.

Analogous (but slightly different) than in the proof of Lemma~\ref{thm:subtest}, consider a deterministic vector $\vec{z} = (z_1, z_2, \ldots, z_{n-1})$ such that $g(\vec{z}) = \min\{c a_{\pi_{\tau}(S)}^\tau, g^{max}_{k-1}\}$, for a positive constant $c > 1$ whose value will be determined later. In choosing this vector, we will apply the definition of extended diminishing returns so that for any $\vec{y}$ satisfying $g(\vec{y}) \leq g(\vec{z})$ and $x \geq 0$, Equation~\eqref{eqn_upperbound_extendedsub} is satisfied. 

Let $S^* = \{v_1, v_2, \ldots, v_{n-1}\}$ be a set of (fictitious) items such that $X_{v_j} = z_j$ with probability $1$ (the performance of each of these fictitious items is deterministic). Therefore, the performance of the set of items $S^*$ is given by $u(S^*) = g(\vec{z}) = \min\{c a_{\pi_{\tau}(S)}^\tau, g^{max}_{k-1}\}$. 

By definition, we know that $a_{\pi_{r}(S)}^r \leq a_{\pi_{\tau}(S)}^\tau$ for all $r$. Moreover, we can upper bound $u(S)$ as follows,
\begin{equation}
u(S) \leq u(S \cup S^*) \leq u(S^*) + \sum_{r=1}^{k}[u(S^* \cup \{\pi_r(S)\}) - u(S^*)].
\label{equ:uulb}
\end{equation}
Let $X_{\pi_r(S)}^{(1)}, X_{\pi_r(S)}^{(2)}, \ldots, X_{\pi_r(S)}^{(r)}$ be independent random variables with distribution $P_{\pi_r(S)}$. Let $X = X_{\pi_r(S)}^{(r)}$ and $\vec{Y} = (X_{\pi_r(S)}^{(1)}, X_{\pi_r(S)}^{(2)},\ldots,X_{\pi_r(S)}^{(r-1)})$. Note that
\begin{align*}
u(S^* \cup \{\pi_r(S)\}) - u(S^*) & = \E [g(\vec{z},X) - g(\vec{z})]\\
	& \stackrel{}{=}~ \E\left[g(\vec{z},X) - g(\vec{z}) ~|~ g(\vec{Y}) \leq g(\vec{z}) \right] \\
& \stackrel{(a)}{\leq} \E \big[g(\vec{Y},X) -  g(\vec{Y}) ~|~ g(\vec{Y}) \leq  g(\vec{z})\big]\\
& \leq \frac{\E [g(\vec{Y},X) -  g(\vec{Y})]}{\Pr[g(\vec{Y}) \leq  g(\vec{z}) ]} \\
& \stackrel{(b)}{\leq} \frac{1}{\Pr [g(\vec{Y}) \leq g(\vec{z}) ]}\frac{a_{\pi_r(S)}^r}{r}. \label{eqn_markovpre}
\end{align*}

Inequality $(a)$ follows from the extended diminishing returns property defined in Definition~\ref{def:esa}. Note that from our definition of $\vec{z}$, for any instantiation $\vec{Y}$ where $g(\vec{Y}) \leq g(\vec{z})$, extended diminishing returns tells us that  $g(\vec{z},X) - g(\vec{z}) \leq g(\vec{Y},X) -  g(\vec{Y})$ for all $X$. Taking the expectation over all $\vec{Y}, X$ conditional upon $g(\vec{Y}) \leq g(\vec{z})$ gives us $(a)$.

Inequality $(b)$ can be shown using only the definition of submodularity as can be seen via the below sequence of inequalities: suppose that $i = \pi_r(S)$.
\begin{align*}
\E [g(\vec{Y},X) -  g(\vec{Y})] & \leq \frac{1}{r}\sum_{s=0}^{r-1}\left(u(\{i^{(1)},\dots,i^{(s)},i^{(r)}\})   - u(\{i^{(1)},\dots,i^{(s)}\}) \right) \\
& = \frac{1}{r}\sum_{s=0}^{r-1}\left(u(\{i^{(1)},\dots,i^{(s)},i^{(s+1)}\})   - u(\{i^{(1)},\dots,i^{(s)}\}) \right)\\
& = \frac{1}{r}u(\{i^{(1)},\dots,i^{(r)}\})\\
& = \frac{a^{r}_{i}}{r} =  \frac{a^{r}_{\pi_r(S)}}{r}.
\end{align*}
All that remains for us is to prove that $\Pr\big[g(\vec{Y}) \leq g(\vec{z})\big] \geq 1-1/c$.

Recall that $g(\vec{z}) = \min\{c a^{\tau}_{\pi_{\tau}(S)}, g^{max}_{k-1}\}$. Let us proceed by considering two cases depending on the value of $g(\vec{z})$. If $g(\vec{z}) = g^{max}_{k-1}$, then $\Pr\big[g(\vec{Y}) \leq g(\vec{z})\big] =1$ trivially.  This is because by definition $g^{max}_{k-1}$ is the maximum value that the function can take on any vector of length $k-1$, and by monotonicity, any vector of size $r-1$ such as $\vec{Y}$ since $r \leq k$. On the other hand, when $g(\vec{z}) = c a_{\pi_{\tau}(S)}^\tau$, we can apply Markov's inequality and bound the desired probability, i.e., 					
$$
\Pr \left[g(\vec{Y}) \geq  c a_{\pi_{\tau}(S)}^\tau \right] \leq \frac{\E \left[g(\vec{Y})\right]}{c a_{\pi_{\tau}(S)}^\tau} \leq \frac{1}{c}
$$
where we used $\E[g(\vec{Y})] = a_{\pi_r(S)}^{r-1} \leq a_{\pi_r(S)}^r \leq a_{\pi_{\tau}(S)}^\tau$. Since $\Pr \left[g(\vec{Y}) \leq  c a_{\pi_{\tau}(S)}^\tau \right] \geq 1 - \Pr \left[g(\vec{Y}) \geq  c a_{\pi_{\tau}(S)}^\tau \right]$, it follows that $\Pr \left[g(\vec{Y}) \leq  c a_{\pi_{\tau}(S)}^\tau \right]\geq 1-1/c$, as desired. 

We have shown that $u(S^* \cup \{\pi_r(S)\}) - u(S^*) \leq (1-1/c)^{-1} a_{\pi_r(S)}^r/r$.

Combining with (\ref{equ:uulb}), we obtain
$$
u(S) \leq c a_{\pi_{\tau}(S)}^\tau + \left(1-\frac{1}{c}\right)^{-1}\left(\frac{a_{\pi_1(S)}^1}{1} + \frac{a_{\pi_2(S)}^2}{2}  + \cdots + \frac{a_{\pi_k(S)}^k}{k}\right).
$$ 

Applying Lemma~\ref{lem_scoremaxbound} to $a_{\pi_{\tau}(S)}^\tau$, we obtain that 

\begin{align*}
u(S) & \leq 2c\frac{1}{\tau}\sum_{r=1}^{\tau} a_{\pi_{r}(S)}^r + \left(1-\frac{1}{c}\right)^{-1}\left(a_{\pi_1(S)}^1 + \frac{a_{\pi_2(S)}^2}{2} + \cdots + \frac{a_{\pi_k(S)}^k}{k} \right)\\
& \leq \left(2c + \left(1-\frac{1}{c}\right)^{-1}\right)\left( a_{\pi_1(S)}^1 + \frac{a_{\pi_2(S)}^2}{2} + \cdots + \frac{a_{\pi_k(S)}^k}{k} \right)
\end{align*}
which completes the proof by taking $c = 2$.

\section{Proof of Lemma~\ref{lem:2}}
\label{app:2}

Before proving Lemma~\ref{lem:2}, we prove that our sketch function $v_j$ as defined in~\eqref{vs} satisfies a simple monotonicity property. This property will be useful in the proof of Lemma~\ref{lem:2}.
\begin{proposition}
\label{prop_lem7_monotonicity}
Suppose $v_j$ is a sketch function for a stochastic monotone submodular function $u_j$ as defined in~\eqref{vs} and let $S = \{i_1,i_2,\ldots, i_{|S|}\} \subseteq N$ such that for all $r\in \{1,2,\ldots, |S|\}$, $\pi_r(S,j) = i_r$. Then, the following inequalities hold for all $r\in \{1,2,\ldots, |S|\}$:
$$v_j(S) \geq v_j(S \setminus \{i_r\}) \geq v_j(S) - \frac{a^r_{i_r,j}}{r}.$$
\end{proposition}

\proof{Proof of Proposition~\ref{prop_lem7_monotonicity}}
Fix some $r\in \{1,2,\ldots, |S|\}$, and for all $t \neq r$, define $\nu_t$ such that $\pi_{\nu_t}(S\setminus \{i_r\},j) = i_t$. That is, $\nu_t$ denotes item $i_t$'s new `rank' in the set $S\setminus \{i_r\}$. Note that $1 \leq \nu_t \leq |S|-1$ and that:
\begin{equation}
\label{eqn_prop_newvalue}
 v_j(S\setminus \{i_r\}) = \sum_{t \neq r}\frac{a^{\nu_t}_{i_t,j}}{\nu_t}.
 \end{equation}

We show via induction on $t$ that for all $t \neq r$, $\nu_t \leq t$, i.e., removal of an item cannot hurt the `rank' of another item. The claim is trivially true when $t=1$ since $\nu_t \geq 1$. Consider an arbitrary $t > 1$, and suppose that the inductive hypothesis is true up to $t-1$. Let us consider two cases: first, if $t < r$, then by definition $\pi_t(S,j) = \pi_{t}(S\setminus \{i_r\}, j) = i_t$ and so the inductive claim holds since $\nu_t = t$. Second, suppose that $t > r$: assume by contradiction that $\nu_t > t$. By the inductive hypothesis, it must be the case that $\pi_{t}(S\setminus \{i_r\},j) \in \{i_t, i_{t+1}, \ldots, i_{|S|}\}$---indeed, for all $t' < t$, we have that $\nu_{t'} \leq t'$. However, we know by definition of $\pi$ that for all $i \in \{i_{t+1}, \ldots, i_{|S|}\}$, it must be true that:
$$a^t_{i_t,j} > a^t_{i,j}.$$

Therefore, if $\nu_t > t$, then $\pi_{t}(S\setminus \{i_r\},j) \in \{i_{t+1}, \ldots, i_{|S|}\}$---this would be a violation of the definition of $\pi$. Hence, the inductive hypothesis follows. 

Now, in order to prove the proposition, we go back to~\eqref{eqn_prop_newvalue},
\begin{align*}
v_j(S\setminus \{i_r\}) & = \sum_{t \neq r}\frac{a^{\nu_t}_{i_t,j}}{\nu_t} \\
& \leq \sum_{t \neq r}\frac{a^{\nu_t}_{i_{\nu_t},j}}{\nu_t} \\
& = \sum_{t=1}^{|S|-1}\frac{a^{t}_{i_{t},j}}{t} \\
& \leq v(S).
\end{align*}
The crucial step above is the second inequality. There, we used the fact that $\nu_t \leq t$, and therefore, if $\nu_t = q$, then $a^{q}_{i_{q}, j} \geq a^{q}_{i_t, j}$ by definition of $i_q$ for all $1 \leq q = \nu_t \leq |S|-1$. The third inequality comes from changing the index from $\nu_t$ to $t$. In summary, we have shown that $v(S) \geq v(S\setminus \{i_r\})$ which is one half the proposition. In order to prove the other half, that is  $v(S\setminus \{i_r\}) \geq v(S) - a^r_{i_r,j}/r$, we utilize the result from Lemma~\ref{lem_scoresubmodularity}, namely that:
$$\frac{a^{\nu_t}_{i_t,j}}{\nu_t} \geq \frac{a^{t}_{i_t,j}}{t},$$
which is true because $\nu_t \leq t$. To conclude the proposition, we have that:
\begin{align*}
v(S) & = \sum_{t=1}^{|S|}\frac{a^{t}_{i_{t},j}}{t} \\
& = \sum_{t\neq r}\frac{a^{t}_{i_{t},j}}{t} + \frac{a^r_{i_r,j}}{r} \\
& \leq \sum_{t\neq r}\frac{a^{\nu_t}_{i_{t},j}}{\nu_t} + \frac{a^r_{i_r,j}}{r} \\
& = v(S\setminus \{i_r\}) + \frac{a^r_{i_r,j}}{r}.\Halmos
\end{align*}
\endproof

We are now ready to prove the main lemma.
\proof{(Proof of Lemma~\ref{lem:2})}We need to show that the greedy algorithm described in Algorithm~\ref{alg_greedyassignment} returns an assignment $\vec{S}=(S_1, S_2, \ldots, S_m)$ that is a $\frac{1}{2}$-approximation to the optimum assignment $\vec{O}=(O_1, O_2, \ldots, O_m)$ that maximizes $v(\vec{S}') = \sum_{j=1}^m v_j(S'_j)$ where the function $v_j$ is as defined in~\eqref{vs}. If the sketch function $v_j$ is submodular, then one can simply apply the well-known result by~\cite{lehmann2006combinatorial} for the submodular welfare maximization problem to show that the greedy algorithm yields the desired approximation factor. However, despite its simplicity, the sketch function $v_j$ is not necessarily submodular, so we cannot directly use the existing proof for submodular welfare maximization as a black-box.

Before proving the result, we introduce some pertinent notation. Recall that our algorithm proceeds in rounds such that at each time step $t$, exactly one item $i \in A$ is added to a partition $j \in P$. Let $\vec{S}(t) = (S_1(t), S_2(t), \ldots, S_m(t))$ denote the assignment at the end of time step $t$, i.e., $S_j(t)$ is the set of items assigned to partition $j \in M$ at the end of $t$ unique assignments. For notational convenience, let $\vec{S}(0) = (\emptyset, \emptyset,\ldots, \emptyset)$. Suppose that $\vec{O}(t) = (O_1(t), O_2(t), \ldots, O_m(t))$ denote the optimal (constrained) assignment such that for every $j \in M$, $S_j(t) \subseteq O_j(t)$, i.e., this assignment deviates from $\vec{S}$ only in the set of items that are unassigned at the end of time step $t$. Finally, suppose that at round $t+1$, if our algorithm assigns item $i \in N$ to partition $j \in M$, then the added welfare is $\Delta(t+1) := a^{|S_j(t)|+1}_{i,j}/(|S_j(t)|+1)$.

The basic idea behind our proof is similar to that of Theorem 12 in~\citep{lehmann2006combinatorial}. Namely, we show that $v(\vec{O}(t)) \leq v(\vec{O}(t+1)) + \Delta(t+1)$ for all $t \in \{0,1,\ldots,\ell-1\}$, where $\ell$ is the total number of rounds the algorithm proceeds for. By cascading this argument, we can show the desired approximation guarantee, i.e.,
\begin{align}
v(\vec{O}(0)) & \leq v(\vec{O}(1)) + \Delta(1) \\
& \leq \cdots \nonumber \\
& \leq v(\vec{O}(t)) + \sum_{r=1}^t \Delta(r) \nonumber\\
& \leq \cdots \nonumber \\
& \leq v(\vec{O}(\ell)) + \sum_{r=1}^\ell \Delta(r) \nonumber \\
& = v(\vec{O}(\ell)) + v(\vec{S}(\ell)) \nonumber \\
& = 2v(\vec{S}).
\end{align}
The first five equations above come from an application of the claimed inequality $v(\vec{O}(t)) \leq v(\vec{O}(t+1)) + \Delta(t+1)$ for all $t \in \{0,1,\ldots,\ell-1\}$. The penultimate and final equations follow from: (a) $\vec{O}(\ell) = \vec{S}(\ell) = \vec{S}$ by definition, and (b) the total welfare generated by the solution $\vec{S}$ is simply the sum of welfare added in each round, i.e., $\sum_{r=1}^{\ell} \Delta(r)$. Finally, this argument can be used to conclude the proof since $\vec{O}(0)$ is the same as the unconstrained optimum assignment $\vec{O}$ by definition.

All that remains for us is to prove the claim  $v(\vec{O}(t)) \leq v(\vec{O}(t+1)) + \Delta(t+1)$ for all $t \in \{0,1,\ldots,\ell-1\}$. In ~\citep{lehmann2006combinatorial}, this claim followed from submodularity. However, since this is no longer a valid approach in our setting, we use a more subtle argument based on the monotonicity result from Proposition~\ref{prop_lem7_monotonicity}. 

Suppose that in round $t+1$, our algorithm assigns item $i$ to partition $j$ and let $|S_j(t+1)| = r$ so that $\Delta(t+1) = a^r_{i,j}/r$. Moreover, suppose that in the constrained optimum solution $\vec{O}(t)$, item $i$ is assigned to partition $j'$ and integer parameter $r'$ is such that $\pi_{r'}(O_{j'}(t),j') = i$. A crucial observation here is that $r' > |S_{j'}(t)|$. Indeed, since $S_{j'}(t) \subseteq O_{j'}(t)$, if $r' \leq |S_{j'}(t)|$, then it would be the case that\footnote{For convenience, we assume no ties here although the proof can easily be extended to the case with ties as long as a consistent tie-breaking rule is used.} $$a^{r'}_{i,j'} > a^{r'}_{\pi_{r'}(S_{j'}(t), j'),j'}.$$

This is naturally a contradiction since Algorithm~\ref{alg_greedyassignment} greedily assigns the item with the maximum marginal benefit at each round and we know that item $i$ was still unassigned at the end of round $t$. Consider the assignment $\vec{O}(t+1)$, we have that:
\begin{equation}
\label{eqn_lem2_optt1}
v(\vec{O}(t+1)) \geq v(\vec{O}(t)) + \big(v_j(O_j(t) \cup \{i\})  - v_j(O_j(t))\big) - \big(v_{j'}(O_{j'}(t))  - v_{j'}(O_{j'}(t) \setminus \{i\})\big).
\end{equation}
Starting with the assignment $\vec{O}(t)$, if we move item $i$ from partition $j'$ to partition $j$, the resulting assignment has a welfare that is denoted by the right hand side of the above inequality. Now, since the resulting assignment also subsumes $\vec{S}(t+1)$, its welfare cannot be larger than $\vec{O}(j+1)$. Consider the term,  $v_j(O_j(t) \cup \{i\})  - v_j(O_j(t))$ from the RHS of~\eqref{eqn_lem2_optt1}---this is non-negative by the monotonicity argument laid out in Proposition~\ref{prop_lem7_monotonicity}. Similarly, consider the other term from the RHS, namely $v_{j'}(O_{j'}(t))  - v_{j'}(O_{j'}(t) \setminus \{i\})$---this is upper bounded by $a^{r'}_{i,j'}/r'$ as per Proposition~\ref{prop_lem7_monotonicity} and our definition of $r'$. Further, according to Lemma~\ref{lem_scoresubmodularity}, we have that:
$$\frac{a^{r'}_{i,j'}}{r'} \leq \frac{a^{|S_{j'}(t)|+1}_{i,j'}}{|S_{j'}(t)|+1},$$

since we proved earlier that $r' > S_{j'}(t)$. Putting all these ingredients together, we arrive upon the desired claim that $v(\vec{O}(t)) \leq v(\vec{O}(t+1)) + \Delta(t+1)$ for all $t \in \{0,1,\ldots,\ell-1\}$:
\begin{align}
v(\vec{O}(t+1)) & \geq v(\vec{O}(t)) + \big(v_j(O_j(t) \cup \{i\})  - v_j(O_j(t))\big) - \big(v_{j'}(O_{j'}(t)) - v_{j'}(O_{j'}(t) \setminus \{i\})\big)  \nonumber \\
& \geq v(\vec{O}(t)) +(0) - \frac{a^{r'}_{i,j'}}{r'} \label{eqn_finallem71}\\
& \geq v(\vec{O}(t))  - \frac{a^{|S_{j'}(t)|+1}_{i,j'}}{|S_{j'}(t)|+1} \label{eqn_finallem72}\\
& \geq v(\vec{O}(t))  - \frac{a^r_{i,j}}{r} \label{eqn_finallem73}\\
& = v(\vec{O}(t)) - \Delta(t+1) \nonumber.
\end{align}

Equation~\eqref{eqn_finallem71} is a product of the monotonicity claims from Proposition~\ref{prop_lem7_monotonicity}. Equation~\eqref{eqn_finallem72} is due to the fact that $r' > |S_{j'}(t)|$ and due to Lemma~\ref{lem_scoresubmodularity}. Finally, the penultimate inequality~\eqref{eqn_finallem73} comes from the property of the greedy algorithm. At round $t+1$, since the greedy algorithm assigned item $i$ to partition $j$ as opposed to partition $j'$, it must have been the case that $a^{|S_{j'}(t)|+1}_{i,j'}/(|S_{j'}(t)|+1) \leq  a^r_{i,j}/r$.  This concludes our proof.  \Halmos\endproof

\section{Sample Average Approximation Algorithms}		
\label{app:saa}

\subsection{NP-Hardness of Sample-Based Stochastic Optimization}
We now present an example of a stochastic submodular optimization problem with a rather simple utility function where employing sample based algorithms may subsequently result in a discrete optimization that is NP-Hard. On the other hand, test score algorithms avoid the additional overhead brought about by solving secondary optimization problems. More concretely, consider the problem of maximizing a stochastic monotone submodular function subject to a cardinality constraint where $g(\vec{x}) = \max\{x_1, x_2, \ldots, x_n\}$. For every $i \in N$, the distribution $P_i$ is defined as follows: let $X_i$ be a random variable such that $X_i = 1$ with probability $p_i$ and $X_i = 0$ with probability $1-p_i$ for some sufficiently small probabilities $(p_i)_{i \in N}$. 

Consider the sample average approximation approach which first computes a collection of $T$ independent sample vectors $(X^{(t)}_1, X^{(t)}_2, \ldots, X^{(t)}_n)_{t=1}^T$, where $X^{(t)}_i \sim P_i$. For a given cardinality parameter $k$, the SAA method would look to compute a subset $S^* \subseteq N$ in order to maximize the number of `covered indices' $t$, i.e., $\arg\max_{S \subseteq N}\sum_{t=1}^T \mathds{1}\{\exists i \in S: X^{(t)}_i = 1\}$, where $\mathds{1}$ is the indicator function that evaluates to one when the condition inside is true and is zero otherwise. However, this is equivalent to the well-studied maximum coverage problem which is known to be NP-Hard. Note that for the same instance, a test score algorithm based on replication scores would return the optimum solution with high probability since the test scores would be monotonically increasing in the probability $p_i$. In the following section, we delve deeper into the sample errors due to test score and SAA methods 

\subsection{Error Probability for Finite Samples}
We discuss the use of sample averages for estimating test scores for the simple example introduced in Example~\ref{example:contentreco} and the numerical results provided in Section~\ref{sec:disc1}. Our goal is to characterize the probability of error in identifying an optimal set of items due to use of sample averages for approximating replication test scores for the aforementioned simple example. The simplicity of this example allows us to derive tight characterizations of the required number of samples for the probability of error to be within a prescribed bound. We also conduct a similar analysis for the sample averaging approach (SAA) that amounts to enumerating and estimating value of each feasible set of items, and compare with the test score based approach. 

Recall that we consider a ground set of items $N$ that consists of type-A and type-B items that reside in two disjoint nonempty sets $A$ and $B$, respectively, such that $N = A \cup B$. For each $i\in A$, $X_i = a$ with probability $1$, and for each $i\in B$, $X_i = b/p$ with probability $p$, and $X_i = 0$ otherwise, where $a,b > 0$ and $p \in (0,1]$ are parameters. We assume that $b/p > a$ so that individual performance of a type-$B$ item is larger than that of any type-$A$ item conditional on the type-$B$ item achieving performance $b/p$. We may think of type-$B$ items as of high-risk, high-return items when $p$ is small. We assume that for given $k$, $|A| \geq k$ and $|B| \geq k$.

We consider the best-shot utility function $u(S) = \E[\max\{x_i\mid i\in S\}]$, which want to maximize over sets $S\in 2^N$ of cardinality $|S| = k$. Clearly, we can distinguish $k + 1$ equivalence cases for sets $S$ with respect to the value of the utility function: class $r$ defined by having $r$ type-$B$ items and $k-r$ type-$A$ items, for $r \in \{0, 1,\ldots, k\}$. Let $C_{k,r}$ denote all sets of cardinality $k$ that are of class $r$.

For each $S \in C_{k,r}$, we have
$$
u(S) = \E[\max\{X_{(r)},a\}]
$$
where $X^{(r)}$ is the largest order statistic of individual performance of type-$B$ items, 
$$
\Pr[X_{(r)}=b/p] = 1-\Pr[X_{(r)} = 0] = 1 -(1-p)^r.
$$
Indeed, we have
$$
u(S) = a (1-p)^r + \frac{b}{p}(1-(1-p)^r).
$$
Since we assumed that $b/p > a$, we have that $u(S)$ is increasing in the class of set $S$, achieving the largest value for $r = k$, i.e. when all items are of type $B$.

In our analysis, we will make use of the well-known Hoeffding's inequality \citep{H63} to bound the probability of the event that a sum of independent random variables with bounded supports deviates from its expected value by more than a given amount. 

\begin{proposition}[Hoeffding's inequality] Let $X_1, X_2, \ldots, X_T$ be independent random variables such that $X_i \in [\alpha_i,\beta_i]$ with probability $1$ for all $i\in \{1,2,\ldots,T\}$. Then, for every $x\geq 0$,
$$
\Pr[X_1 + X_2 +\cdots + X_T - \E[X_1 + X_2 + \cdots + X_T] \geq x] \leq \exp\left(-\frac{2x^2 T^2}{\sum_{i=1}^T (\beta_i - \alpha_i)^2}\right).
$$
\end{proposition}

\paragraph{Test scores} Consider sample average estimators of replication test scores defined as follows:
$$
\hat{a}_i = \frac{1}{T}\sum_{t=1}^T\max\{X_i^{(1,t)}, X_i^{(2,t)}, \ldots, X_i^{(k,t)}\}
$$
where $X_i^{(j,t)}$ are independent over $i$, $j$, and $t$ and $X_{i}^{(j,t)}$ has distribution $P_i$. Indeed, by denoting $X_i^{((k),t)}$ the largest order statistic of $P_i$, we can write
$$
\hat{a}_i = \frac{1}{T} \sum_{t=1}^T X_i^{((k),t)}.
$$
Indeed, for our example, for every $i\in A$, we have $\hat{a}_i = a$. On the other hand, for every $i\in B$, we have that $X_i^{((k),t)}$ is equal to $b/p$ with probability $1-(1-p)^k$ and is equal to $0$ otherwise. Thus, for every $i\in B$, $a_i = \E[\hat{a}_i] = (b/p)(1-(1-p)^k)$. In what follows, we assume that $(b/p)(1-(1-p)^k) > a$, i.e. $\E[\hat{a}_i] < \E[\hat{a}_j]$ for every $i\in A$ and $j\in B$. In this case, in absence of estimation noise, the replication test score based algorithm correctly identifies an optimum set of items to be a set $k$ type-$B$ items. We declare an error event to occur if $\hat{a}_j < \hat{a}_i$ for some items $i\in A$ and $j\in B$, and denote with $p_e$ the probability of this event, i.e. $p_e :=\Pr[\cup_{i\in A, j\in B}\{\hat{a}_j < \hat{a}_i\}]$.

By the Hoeffding's inequality, for any type-$A$ item $i$ and type-$B$ item $j$, we have
$$
\Pr[\hat{a}_j < \hat{a}_i] = \Pr\left[\frac{1}{T} \sum_{t=1}^T X_i^{((k),t)} < a\right] \leq \exp(-2(1-(1-p)^k -ap/b)^2T).
$$ 
By the union bound, we have
$$
p_e \leq |A||B|\exp\left(-2p^2\left((1-(1-p)^k)/p-a/b\right)^2T\right).
$$
Hence, for $p_e \leq \delta$ to hold, for given $\delta \in (0,1]$, it suffices that the total number of samples $m := nkT$ is such that
\begin{equation}
m \geq \frac{nk}{2 p^2 \left((1-(1-p)^k)/p-a/b\right)^2}\log\left(\frac{|A||B|}{\delta}\right).
\label{equ:m2}
\end{equation}

Under given assumptions $|A|+|B| = n$ and $|A|,|B|\geq k$, we have $|A| |B|\leq n^2/4$, so in (\ref{equ:m2}), we can replace $\log(|A||B|/\delta)$ with $2\log(n/2) + \log(1/\delta)$ to obtain a sufficient number of samples. Note that $((1-(1-p)^k)/p-a/b)^2 = (k-a/b)^2(1+o(1))$ for small $p$. Hence, we have $m = \Omega(1/p^2)$. 

\paragraph{SAA approach} Consider now a stochastic average approximation method that amounts to enumerating all feasible sets and then choosing the one that has the best estimated value: for each $S\subseteq N$ such that $|S|=k$, estimating $u(S)$ with the sample average $\hat{u}(S)$ defined as
$$
\hat{u}(S) = \frac{1}{T}\sum_{t=1}^T \max\{X_i^{(t)}\mid i\in S\}
$$
where $X_i^{(t)}$ are independent random variables over $i$ and $t$ and $X_{i}^{(t)} \sim P_i$ for all $S\in 2^N$ and $t \in \{1,2\ldots,T\}$.

For every class-$0$ set $S$, whose all elements are of type $A$, we have $\hat{u}(S) = a$ with probability $1$. For every class-$r$ set $S$, with $1\leq r < k$, we have $\hat{u}(S)\geq a$. For every class-$r$ set $S$, with $0\leq r \leq k$, we have
$$
\hat{u}(S) = a\left(1-\frac{X_S}{T}\right) + \frac{b}{p}\frac{X_S}{T} 
$$
where $X_S\sim \mathrm{Bin}(T,1-(1-p)^r)$. 

Comparing $\hat{u}(S) > \hat{u}(S')$ for any two sets $S$ and $S'$ is equivalent to $X_S > X_{S'}$. By the Hoeffding's inequality, for an two sets $S$ and $S'$ such that $\E[X_S] > \E[X_{S'}]$, we have
\begin{equation}
\Pr[X_S \leq X_{S'}] \leq \exp\left(-\frac{1}{2}(\E[X_S]-\E[X_{S'}])^2 T\right).
\label{equ:x1x2}
\end{equation}

We declare an error event to occur if $\hat{u}(S) < \hat{u}(S')$ for every class $k$ set $S$ and some class $r<k$ set $S'$ and denote with $p_e$ the probability of this event. Then, by the union bound, we have
\begin{eqnarray*}
p_e &=& \Pr[X_{S} < X_{S'} \hbox{ for every } S\in C_{k,k} \hbox{ and some } S'\in \cup_{0\leq r<k}C_{k,r}]\\
&\leq & \Pr[\cup_{S'\in \cup_{0\leq r<k}C_{k,r}} \{X_{S_k} < X_{S'}\}] \\
&\leq & \sum_{r=0}^{k-1} |C_{k,r}| \Pr[X_{S_k} \leq X_{S_r}]\\
&\leq & \left(\sum_{r=0}^{k-1} |C_{k,r}|\right) \Pr[X_{S_k} \leq X_{S_{k-1}}]\\
&=& \left(\binom{n}{k} - \binom{|B|}{k}\right) \Pr[X_{S_k} \leq X_{S_{k-1}}]
\end{eqnarray*}
where $S_i$ denotes an arbitrarily fixed set in $C_{k,i}$. 

Combining with (\ref{equ:x1x2}), we have
\begin{equation}
p_e \leq \left(\binom{n}{k} - \binom{|B|}{k}\right)\exp\left(-\frac{1}{2}p^2(1-p)^{2(k-1)}T\right).
\label{equ:pe1}
\end{equation}

Note that the error exponent in (\ref{equ:pe1}) is due to discriminating a class $k$ set from a class $k-1$ set. In order to have $p_e\leq \delta$, for given $\delta\in (0,1]$, it suffices for the total number of samples $m :=nT$ to be such that
\begin{equation}
m \geq \frac{2n}{p^2(1-p)^{2(k-1)}}\log\left(\frac{\binom{n}{k} - \binom{|B|}{k}}{\delta}\right).
\label{equ:m1}
\end{equation}

Note that in (\ref{equ:m1}) we can replace $\binom{n}{k} - \binom{|B|}{k}$ with $\binom{n}{k}$, which is tight for $|B| = \Theta(k)$. Furthermore, we can use the well known inequalities $k \left(\log\left(\frac{n}{k}\right)\right) \leq \log\left(\binom{n}{k}\right) \leq k \left(\log\left(\frac{n}{k}\right)+1\right)$. Thus, the logarithmic term in (\ref{equ:m1}) contributes a factor of $k$ to the sufficient number of samples. Note also that $m = \Omega(1/p^2)$.

\paragraph{Summary} The analysis of the estimation error for the SAA approach requires to consider discrimination of a set with all type-$B$ items and a set that has at least one type-$A$ item. On the other hand, for the approach based on using replication test scores, we only need to consider discrimination of a set with all type-$B$ items and a set with all type-$A$ items. For both approaches, we obtain that the error exponent scales as $\Theta(p^2)$ for small $p$. The SAA approach can require a larger number of samples than the replication test score approach, which is demonstrated by numerical results in Section~\ref{sec:disc1}. 

\section{Proof of Proposition~\ref{thm:cesmeantestscores}} 
\label{sec:cesmeantestscores}
	
	Let $X_1, X_2, \ldots,X_n$ be independent random variables with distributions $P_1, P_2, \ldots, P_n$, respectively, and let $\vec{X} := (X_1, X_2,\ldots,X_n)$. Without loss of generality, assume that items are enumerated in decreasing order of mean test scores, i.e. $\E[X_1]\geq \E[X_2]\geq \cdots \geq \E[X_n]$.  Let $S = \{i_1,i_2,\ldots,i_k\}$ be an arbitrary subset of items in $N$. Then, we have
			\begin{eqnarray}
			u(S) &=& \E[g(\perfmap_S(\vec{X}))]\nonumber\\
			&=& \E[[g(\perfmap_{S}(\vec{X})) - g(\perfmap_{S\setminus \{i_k\}}(\vec{X}))] + [g(\perfmap_{S\setminus \{i_k\}}(\vec{X})) - g(\perfmap_{S\setminus \{i_{k-1},i_k\}}(\vec{X}))] +\cdots+ [g(\perfmap_{\{i_1\}}(\vec{X}))-g(\phi,\ldots,\phi)]]\nonumber\\
&=& [u(S)-u(S\setminus\{i_k\})] + [u(S\setminus\{i_k\}) - u(S\setminus\{i_{k-1},i_k\})] + \cdots + [u(\{i_1\})-u(\emptyset)]\nonumber\\
			&\leq & u(\{i_k\}) + u(\{i_{k-1}\}) + \cdots + u(\{i_1\})\nonumber\\
			&=& \sum_{i\in S} \E[X_i]\nonumber\\
			&\leq & \sum_{i=1}^k\E[X_i]\label{equ:ineq1}
			\end{eqnarray}
			where the first inequality follows by the submodularity of function $u$, the second inequality is by the assumption that items are enumerated in decreasing order of their mean test scores. 
			
			By Jensen's inequality, for every $(x_1, x_2, \ldots,x_k) \in \reals_+^k$, we have
			$$
			\frac{1}{k}\sum_{i=1}^k x_i = \frac{1}{k}\sum_{i=1}^k (x_i^r)^{1/r} \leq \left(\frac{1}{k}\sum_{i=1}^k x_i^r\right)^{1/r}.
			$$
Hence, we have
\begin{equation}
			\sum_{i=1}^k\E[X_i] \leq k^{1-1/r} \E\left[\left(\sum_{i=1}^k X_i^r\right)^{1/r}\right].
\label{equ:ineqstar}
\end{equation} 
			From (\ref{equ:ineq1}) and (\ref{equ:ineqstar}), for every $S\subseteq N$ such that $|S| = k$,
			$$
			u(M) = \E\left[\left(\sum_{i=1}^k X_i^r\right)^{1/r}\right] \geq \frac{1}{k^{1-1/r}} \E\left[\left(\sum_{i\in S} X_i^r\right)^{1/r}\right] = \frac{1}{k^{1-1/r}}u(S).
			$$
			
			The tightness can be established as follows. Let $N$ consist of two disjoint subsets of items $M$ and $R$, where $M$ is a set of $k$ items whose each individual performance is of value $1+\epsilon$ with probability $1$, for parameter $\epsilon > 0$, and $R$ is a set of $k$ items whose each individual performance is of value $a$ with probability $1/a$ and of value $0$ otherwise, for parameter $a \geq 1$. Then, we note that
			$$
			u(M) = k^{1/r}(1+\epsilon)
			$$ 
			and
			\begin{eqnarray*}
				u(\OPT)  \geq  u(R) &=& \E\left[\left(\sum_{i\in R}X_i^r\right)^{1/r}\right] \\
				& \geq & a\Pr\left[\sum_{i\in R}X_i > 0\right]\\
				& = & a \left(1-\left(1-\frac{1}{a}\right)^k\right)\\
				& \geq & a \left(1-e^{-k/a}\right).
			\end{eqnarray*}
			
			Hence, it follows that
			$$
			\frac{u(M)}{u(\OPT)} \leq (1+\epsilon)\frac{1}{k^{1-1/r}} \frac{k/a}{1-e^{-k/a}}.
			$$
			
			The tightness claim follows by taking $a$ such that $k = o(a)$, in which case $(k/a)/(1-e^{-k/a})=1 + o(1)$.

\section{Proof of Proposition~\ref{thm:quant}}
\label{sec:quant}

		
\paragraph{Proof of Claim~(a)} If $k$ is a constant, then there is no $r$ satisfying both conditions $r=o(1)$ and $r>1$. Hence, it suffices to consider $k = \omega(1)$ and show that the following statement holds: for any given $\theta > 0$, there exists an instance for which greedy selection in decreasing order of quantile test scores cannot give a constant-factor approximation. 
			
			Consider the distributions of random variables $X_i$ defined as follows:
			\begin{enumerate}
				\item Let $X_i$ be equal to $a$ with probability $1$ for $1\le i \le k$. For each of these items, the quantile test score is equal to $a$ and the replication score is equal to $ak^{1/r}$.
				\item Let $X_i$ be equal to $0$ with probability $1-1/n$, and equal to $b \theta n/k$ with probability $1/n$ for $k+1 \le i \le 2k$. Note that in the limit as $n$ grows large, each of these items has quantile test score of value $b$ and replication score of value $b\theta$. 
				\item Let $X_i$ be equal to $0$ with probability $1-\theta/k$ and equal to $c$ with probability $\theta/k$ for $2k+1 \le i \le 3k$. For each of these items, the quantile test score is equal to $c$ and the replication test score is less than or equal to $c\theta^{1/r}$.
				\item Let $X_i$ be equal to $0$ for $3k+1 \le i \le n$.
			\end{enumerate}
			
			If $\theta$ is a constant, i.e., $\theta= O(1)$, we can easily check that greedy selection in decreasing order of quantile test scores cannot give a constant-factor approximation with $a=b=1$ and $c=2$. Under this condition, the selected set of items is $\{2k+1,\dots,3k\}$. However, we have 
			\begin{eqnarray*}
				\frac{\E\left[\left(\sum_{i=2k+1}^{3k} X_i^r \right)^{1/r} \right]}{\E\left[\left(\sum_{i=1}^{k} X_i^r \right)^{1/r} \right]} 
				&=&\frac{\E\left[\left(\sum_{i=2k+1}^{3k} X_i^r \right)^{1/r} \right]}{k^{1/r}}\cr
				&\le & \frac{\left(\sum_{i=2k+1}^{3k}\E\left[ X_i^r \right] \right)^{1/r}}{k^{1/r}} \cr
				& = & 2 \left( \frac{\theta}{k} \right)^{1/r} \\
				&= &~ o(1),
			\end{eqnarray*}
			which is because $k=\omega(1)$, $\theta = O(1)$, and $r= o(\log(k))$.
			
			Since $r > 1$, if $\theta$ goes to infinity as $n$ goes to infinity, i.e. for $\theta = \omega(1)$, we have 
			\begin{eqnarray*}
				\frac{\E\left[\left(\sum_{i=2k+1}^{3k} X_i^r \right)^{1/r} \right]}{\E\left[\left(\sum_{i=k+1}^{2k} X_i^r \right)^{1/r} \right]} 
				&\le & \frac{\left(\sum_{i=2k+1}^{3k}\E\left[ X_i^r \right] \right)^{1/r}}{\theta} \cr
				& = & 2 \theta^{(1-r)/r}\\
				& = & o(1).
			\end{eqnarray*}

Therefore, the greedy selection in decreasing order of quantile test scores has a vanishing utility compared to the optimal value. 
			
\paragraph{Proof of Claim~(b)} Let $T(X,S)$ be a subset of $S$ such that $i\in T(X,S)$ if, and only if, $X_i\geq P_i^{-1}(1-1/k)$, for $i\in S$. Let $a_{\max} = \max_{i \in S} a_{i}$ and $a_{\min} = \min_{i\in S} a_{i}$. We will show that there exist constants $\chigh$ and $\clow$ such that 
			$$ 
			\clow a_{\min} \le \E\left[\left(\sum_{i\in S} X_i^r \right)^{1/r} \right] \le \chigh a_{\max}. 
			$$
			
			Since $(x+y)^{1/r} \le x^{1/r} + y^{1/r}$ for all $x,y \ge 0$ and $r>1$, we have
			\begin{eqnarray*}
				\E \left[ \left(\sum_{i\in S} X_i^r \right)^{1/r} \right] & =  &
				\E\left[\left(\sum_{i\in T(X,S)} X_i^r +\sum_{i\in S\setminus T(X,S)} X_i^r \right)^{1/r} \right] \cr
				& \le & \E\left[\left(\sum_{i\in T(X,S)} X_i^r\right)^{1/r} +\left(\sum_{i\in S\setminus T(X,S)} X_i^r \right)^{1/r} \right] \cr
				& \le & \E\left[\sum_{i\in T(X,S)} X_i +\left(\sum_{i\in S\setminus T(X,S)} X_i^r \right)^{1/r} \right] \cr
				& \le & \E\left[\sum_{i\in T(X,S)} X_i +\left(\sum_{i\in S\setminus T(X,S)} (a_{\max})^r \right)^{1/r} \right] \cr
				&\le & \left( \E\left[ |T(X,S)| \right] + k^{1/r}  \right)a_{\max}\\ 
				&= & (1+k^{1/r}) a_{\max}.
			\end{eqnarray*}
			
			By the Minkowski inequality, $\left(\sum_{i \in A} \E \left[X_i \right] ^p\right)^{1/p}  \le \E \left[\left(\sum_{i \in A} X_i^p\right)^{1/p} \right] $ for all $A \subseteq S$. Thus, we have
			\begin{eqnarray*}
				\E\left[\left(\sum_{i\in S} X_i^p \right)^{1/p} \right] & =  &
				\E\left[\left(\sum_{i\in T(X,S)} X_i^p +\sum_{i\in S\setminus T(X,S)} X_i^p \right)^{1/p} \right] \cr
				&\ge & \E\left[\left(\sum_{i\in T(X,S)} X_i^p  \right)^{1/p}\right] \cr
				&=&\sum_{A\subseteq S} \Pr\{T(X,S) = A \}\E\left[\left(\left. \sum_{i\in A} X_i^p\right)^{1/p} \right| T(X,S) = A \right]  \cr
				&\ge&\sum_{A\subseteq S} \Pr[T(X,S) = A]\left(\sum_{i\in A} \E[X_i | i \in T(X,S) ]^p\right)^{1/p}  \cr
				&\ge & \sum_{A\subseteq S} \Pr\{T(X,S) = A \} |A|^{1/p} a_{\min}\cr
				&\ge & \left( 1- (1-1/k)^k\right) a_{\min}\\
				 & \ge & (1-1/e) a_{\min}.
			\end{eqnarray*}
			
Therefore, the greedy selection in decreasing order of quantile test scores gives a constant-factor approximation of the optimal value.

\end{APPENDIX}
%
%







\end{document}